\newcommand{\myparagraph}[1]{\vspace{1ex}\noindent{\bf #1}}
\newcommand{\name}{\textsf{OATH}\xspace}
\newcommand{\samp}{\overset{{\scriptscriptstyle\$}}{\leftarrow}}
\newtheorem{theorem}{Theorem}[section]
\newtheorem{corollary}{Corollary}[section]
\newtheorem{definition}{Definition}[section]
\newenvironment{proofsketch}{\proof}{\endproof}
\newcommand{\eg}{{e.g.}\xspace}
\newcommand{\ie}{{i.e.}\xspace}
\newcommand{\verifier}{\mathcal{V}}
\newcommand{\prover}{\mathcal{P}}
\newcommand{\client}{\mathcal{C}}
\def\checkmark{\tikz\fill[scale=0.4](0,.35) -- (.25,0) -- (1,.7) -- (.25,.15) -- cycle;} 
\pgfplotsset{compat=1.8}
\pgfplotsset{compat = 1.15, cycle list/Set1-8} 
\theoremstyle{plain}
\theoremstyle{definition}
\theoremstyle{remark}
\title{Secure and Confidential Certificates of Online Fairness}
\author{Olive Franzese \\
  University of Toronto \& Vector Institute \\
  Toronto, ON 
  \texttt{olive.franzese@vectorinstitute.ai}
  \And 
  Ali Shahin Shamsabadi \\
  Brave Software \\
  London, UK 
  \And
  Carter Luck \\ 
  University of Massachusetts, Amherst \\
  Amherst, MA 
  \And
  Hamed Haddadi \\
  Imperial College London \& Brave Software \\
  London, UK \\
}
\begin{document}

\maketitle

\begin{abstract}
The ``black-box service model'' enables ML service providers to serve clients while keeping their intellectual property and client data confidential. Confidentiality is critical for delivering ML services legally and responsibly, but makes it difficult for outside parties to verify important model properties such as fairness. Existing methods that assess model fairness confidentially lack either (i) {\bf \emph{reliability}} because they certify fairness with respect to a static set of data, and therefore fail to guarantee fairness in the presence of distribution shift or service provider malfeasance; and/or (ii) {\bf \emph{scalability}} due to the computational overhead of confidentiality-preserving cryptographic primitives. We address these problems by introducing {\bf \emph{online fairness certificates}}, which verify that a model is fair with respect to data received by the service provider \emph{online} during deployment. We then present \name, a deployably efficient and scalable zero-knowledge proof protocol for confidential online group fairness certification. \name exploits statistical properties of group fairness via a ``cut-and-choose'' style protocol, enabling  scalability improvements over baselines.
\end{abstract}

\section{Introduction}

Service providers commonly adopt a ``black-box service model''
to provide ML-driven services to clients in sensitive domains like finance and healthcare. The confidentiality offered by the black box is critical for protecting data privacy and Intellectual Property (IP), but hinders the ability of outside parties to verify properties of the model such as fairness~\cite{dwork2012fairness,hutchinson201950}. Since ML models are susceptible to bias based on race, sex, color, disability, or location~\cite{sweeney2013discrimination,imana2023having,waddell2016algorithms,siddiqi2012credit,kleinberg2018human,hall2023dig,buolamwini2018gender,gong2021mitigating, angwin2016machine}, verification of fairness is also critical in sensitive applications. In this work we reconcile these two important objectives--fairness and confidentiality--by asking:
\begin{center}
\textit{How can we ensure that {\bf \emph{deployed models}} are fair while preserving {\bf \emph{confidentiality}}?} 
\end{center}

\looseness=-1 Existing solutions (detailed in Appendix~\ref{app:compextra}) for confidential fairness assessment include black-box auditing~\cite{pentyala2022privfair,panigutti2021fairlens,saleiro2018aequitas} and cryptographic verification~\cite{shamsabadi2022confidential,yadav2024fairproof,kilbertus2018blind,segal2021fairness,toreini2023verifiable}, but nearly all certify fairness \emph{only} with respect to \emph{offline data} such as a static training or audit dataset. Such \emph{offline fairness certificates} suffer from unintentional or intentional fairness assessments failures when the model is deployed (Figure~\ref{fig:overview}, first row).

\begin{figure}
    \centering
    \clearpage{}\tikzset{
    triangle/.style={
        regular polygon,
        regular polygon sides=3,
        fill=orange,
        node distance=1pt,
        minimum size=2pt,
        inner sep=1pt
    }
}
\tikzset{
    square/.style={
        regular polygon,
        regular polygon sides=4,
        fill=orange,
        node distance=1pt,
        minimum size=6pt,
        inner sep=1pt
    }
}

\definecolor{personcolor}{gray}{0.65}
\begin{tikzpicture}[scale=0.8]
    \draw[black, thick] (0,-0.5) rectangle (17.5,5.0);
\draw[black, thick] (0.5,1.625) rectangle (2.25,3.375);
    \node[square,minimum size=2pt] at (0.925,2.15) {};
    \node[triangle,inner sep=0.7pt,minimum size=2pt] at (1.375,2.1) {};
    \node[square,minimum size=2pt] at (1.025,1.95) {};
    \node[square,minimum size=2pt] at (1.3250000000000002,1.8) {};
    \node[square,minimum size=2pt] at (1.075,2.6) {};
    \node[square,minimum size=2pt] at (1.1749999999999998,2.3) {};
    \node[square,minimum size=2pt] at (0.7749999999999999,2.35) {};
    \node[square,fill=blue,minimum size=2pt] at (1.5750000000000002,3.2) {};
    \node[square,minimum size=2pt] at (0.925,2.55) {};
    \node[triangle,inner sep=0.7pt,minimum size=2pt] at (1.975,1.75) {};
    \node[square,minimum size=2pt] at (1.125,2.1) {};
    \node[triangle,inner sep=0.7pt,minimum size=2pt] at (1.525,2.25) {};
    \node[square,minimum size=2pt] at (0.7749999999999999,3.05) {};
    \node[square,minimum size=2pt] at (0.7250000000000001,3.25) {};
    \node[square,fill=blue,minimum size=2pt] at (1.4249999999999998,3.15) {};
    \node[triangle,inner sep=0.7pt,minimum size=2pt] at (1.125,2.5) {};
    \node[triangle,fill=blue,inner sep=0.7pt,minimum size=2pt] at (1.975,2.6) {};
    \node[square,fill=blue,minimum size=2pt] at (1.725,2.1) {};
    \node[square,fill=blue,minimum size=2pt] at (1.625,2.85) {};
    \node[triangle,fill=blue,inner sep=0.7pt,minimum size=2pt] at (2.125,3.05) {};
    \node[triangle,fill=blue,inner sep=0.7pt,minimum size=2pt] at (1.9249999999999998,3.15) {};
    \node[triangle,fill=blue,inner sep=0.7pt,minimum size=2pt] at (2.075,2.3) {};
    \draw[dashed,black,ultra thick] (0.5,1.75) -- (2.25,3.25);
\draw[black, thick] (4.25,1.625) rectangle (6.0,3.375);
    \node[square,opacity=0.25,minimum size=2pt] at (4.675000000000001,2.15) {};
    \node[triangle,opacity=0.25,inner sep=0.7pt,minimum size=2pt] at (5.125,2.1) {};
    \node[square,opacity=0.25,minimum size=2pt] at (4.775,1.95) {};
    \node[square,opacity=0.25,minimum size=2pt] at (5.074999999999999,1.8) {};
    \node[square,opacity=0.25,minimum size=2pt] at (4.824999999999999,2.6) {};
    \node[square,opacity=0.25,minimum size=2pt] at (4.925000000000001,2.3) {};
    \node[square,opacity=0.25,minimum size=2pt] at (4.525,2.35) {};
    \node[square,fill=blue,opacity=0.25,minimum size=2pt] at (5.324999999999999,3.2) {};
    \node[square,opacity=0.25,minimum size=2pt] at (4.675000000000001,2.55) {};
    \node[triangle,opacity=0.25,inner sep=0.7pt,minimum size=2pt] at (5.725,1.75) {};
    \node[square,opacity=0.25,minimum size=2pt] at (4.875,2.1) {};
    \node[triangle,opacity=0.25,inner sep=0.7pt,minimum size=2pt] at (5.275,2.25) {};
    \node[square,opacity=0.25,minimum size=2pt] at (4.525,3.05) {};
    \node[square,opacity=0.25,minimum size=2pt] at (4.475,3.25) {};
    \node[square,fill=blue,opacity=0.25,minimum size=2pt] at (5.175000000000001,3.15) {};
    \node[triangle,opacity=0.25,inner sep=0.7pt,minimum size=2pt] at (4.875,2.5) {};
    \node[triangle,fill=blue,opacity=0.25,inner sep=0.7pt,minimum size=2pt] at (5.725,2.6) {};
    \node[square,fill=blue,opacity=0.25,minimum size=2pt] at (5.475,2.1) {};
    \node[square,fill=blue,opacity=0.25,minimum size=2pt] at (5.375,2.85) {};
    \node[triangle,fill=blue,opacity=0.25,inner sep=0.7pt,minimum size=2pt] at (5.875,3.05) {};
    \node[triangle,fill=blue,opacity=0.25,inner sep=0.7pt,minimum size=2pt] at (5.675000000000001,3.15) {};
    \node[triangle,fill=blue,opacity=0.25,inner sep=0.7pt,minimum size=2pt] at (5.824999999999999,2.3) {};
\fill[personcolor] (4.75,2.95) arc (180:0:0.125);
    \fill[personcolor] (4.875,3.1375) circle (0.075);
    \node[square,minimum size=2pt] at (4.875,3.0) {};
    \fill[personcolor] (4.35,2.7) arc (180:0:0.125);
    \fill[personcolor] (4.475,2.8875) circle (0.075);
    \node[square,minimum size=2pt] at (4.475,2.75) {};
    \fill[personcolor] (4.949999999999999,2.65) arc (180:0:0.125);
    \fill[personcolor] (5.074999999999999,2.8375) circle (0.075);
    \node[triangle,inner sep=0.7pt,minimum size=2pt] at (5.074999999999999,2.7) {};
    \fill[personcolor] (5.550000000000001,2.7) arc (180:0:0.125);
    \fill[personcolor] (5.675000000000001,2.8875) circle (0.075);
    \node[square,fill=blue,minimum size=2pt] at (5.675000000000001,2.75) {};
    \fill[personcolor] (5.4,2.3) arc (180:0:0.125);
    \fill[personcolor] (5.525,2.4875) circle (0.075);
    \node[square,fill=blue,minimum size=2pt] at (5.525,2.35) {};
    \fill[personcolor] (4.45,1.75) arc (180:0:0.125);
    \fill[personcolor] (4.575,1.9375) circle (0.075);
    \node[triangle,fill=blue,inner sep=0.7pt,minimum size=2pt] at (4.575,1.8) {};
\draw[dashed,black,ultra thick] (4.25,1.75) -- (6.0,3.25);
\node[align=center] (offline) at (1.375,3.75) {\scriptsize Offline Data};
    \node[align=center] (fair) at (1.375,1.0) {\scriptsize Fairness\\\scriptsize Violation: 0.00};
    \node[align=center] (online) at (5.125,3.75) {\scriptsize Online Data};
    \node[align=center] (unfair) at (5.125,1.0) {\scriptsize\color{red}Fairness\\\scriptsize\color{red}Violation: 1.00};
\draw[thick,black] (2.75,-0.5) rectangle (14.75,0.0);
    \node[triangle,fill=gray] at (3.0999999999999996,-0.275) {};
    \node[square,fill=gray] at (3.3499999999999996,-0.25) {};
    \node[anchor=west,align=left] at (3.45,-0.25) {\scriptsize Class};
    \node[triangle] at (4.65,-0.275) {};
    \node[triangle,fill=blue] at (4.9,-0.275) {};
    \node[anchor=west,align=left] at (4.95,-0.25) {\scriptsize Sensitive Attribute};
    \draw[dashed,black,ultra thick] (8.1,-0.45)--(8.6,-0.05);
    \node[anchor=west,align=left] at (8.5,-0.25) {\scriptsize$f_\theta$ Decision Boundary};
    \fill[personcolor] (12.2,-0.45) arc (180:0:0.2);
    \fill[personcolor] (12.4,-0.15) circle (0.1);
    \node[square,minimum size=2pt] at (12.4,-0.35) {};
    \node[anchor=west,align=left] at (12.5,-0.25) {\scriptsize Client Data};
    
    \node[align=center] (distshift) at (3.25,2.5) {\scriptsize Distribution\\\scriptsize Shift\\\scriptsize or Audit\\\scriptsize Gaming};
    \draw[black,thick] [->] (2.25,2.5) -- (4.25,2.5);
\draw[black,thick] (6.5,5.0) -- (6.5,0.0);
\fill[personcolor] (6.699999999999999,2.4) arc (180:0:0.3);
    \fill[personcolor] (7.0,2.85) circle (0.175cm);
    \node[square] at (7.0,2.525) {};
\draw[thick,black] (6.65,2.35) rectangle (7.35,3.075);
    
    \fill[personcolor] (7.699999999999999,2.4) arc (180:0:0.3);
    \fill[personcolor] (8.0,2.85) circle (0.175cm);
    \node[square,fill=blue] at (8.0,2.525) {};
\draw[thick,black] (7.65,2.35) rectangle (8.35,3.075);

    \node[align=center] at (8.774999999999999,2.675) {\dots};
    
    \fill[personcolor] (9.2,2.4) arc (180:0:0.3);
    \fill[personcolor] (9.5,2.85) circle (0.175cm);
    \node[square] at (9.5,2.525) {};
\draw[thick,black] (9.149999999999999,2.35) rectangle (9.850000000000001,3.075);

    \node (p_service) at (11.5,4.5) {\scriptsize Prover (Service Provider)};
    \draw[thick,black] (6.75,4.25) rectangle (17.25,4.75);
    \node (v_service) at (11.5,1.0) {\scriptsize Verifier (Auditor)};
    \draw[thick,black] (6.75,0.75) rectangle (17.25,1.25);

    \draw[thick,black] (6.75,0.6500000000000004) -- (6.75,0.5) -- (10.149999999999999,0.5) -- (10.149999999999999,0.6500000000000004);
    \node at (8.45,0.25) {\scriptsize Service Phase};
    \draw[thick,black] (10.350000000000001,0.6500000000000004) -- (10.350000000000001,0.5) -- (17.25,0.5) -- (17.25,0.6500000000000004);
    \node at (13.8,0.25) {\scriptsize Audit Phase};

    \draw[thick,black] [->] (6.9,3.1) -- (6.9,4.2); 
    \node[square] at (6.699999999999999,3.65) {};
    \draw[thick,black] [->] (7.1,4.2) -- (7.1,3.1); 
    \node at (7.35,3.65) {\scriptsize $o_1$};
    
    \draw[thick,black] [->] (7.9,3.1) -- (7.9,4.2);
    \node[square,fill=blue] at (7.699999999999999,3.65) {};
    \draw[thick,black] [->] (8.1,4.2) -- (8.1,3.1);
    \node at (8.35,3.65) {\scriptsize$o_2$};
    \draw[thick,black] [->] (9.399999999999999,3.1) -- (9.399999999999999,4.2);
    \node[triangle] at (9.2,3.65) {};
    \draw[thick,black] [->] (9.600000000000001,4.2) -- (9.600000000000001,3.1);
    \node at (9.899999999999999,3.65) {\scriptsize$o_n$};
    
    \draw[thick,black] [->] (7.0,2.35) -- (7.0,1.25);
    \draw[thick,black] (7.15,1.4500000000000002) rectangle (7.65,1.85);\draw[ultra thick,black] (7.25,1.85)--(7.25,2.0) arc (180:0:0.15) -- (7.550000000000001,1.85);
    \node[square] at (7.25,1.6500000000000004) {};
    \node at (7.5,1.6500000000000004) {\scriptsize$o_1$};
    
    \draw[thick,black] [->] (8.0,2.35) -- (8.0,1.25);
    \draw[thick,black] (8.15,1.4500000000000002) rectangle (8.649999999999999,1.85);\draw[ultra thick,black] (8.25,1.85)--(8.25,2.0) arc (180:0:0.15) -- (8.55,1.85);
    \node[square,fill=blue] at (8.25,1.6500000000000004) {};
    \node at (8.5,1.6500000000000004) {\scriptsize$o_2$};
    
    \draw[thick,black] [->] (9.5,2.35) -- (9.5,1.25);
    \draw[thick,black] (9.649999999999999,1.4500000000000002) rectangle (10.149999999999999,1.85);\draw[ultra thick,black] (9.75,1.85)--(9.75,2.0) arc (180:0:0.15) -- (10.05,1.85);
    \node[triangle] at (9.75,1.6500000000000004) {};
    \node at (10.0,1.6500000000000004) {\scriptsize$o_n$};

    \node (p_audit) at (12.5,4.5) {\phantom{\Large$\prover$}};
    \node (v_audit) at (12.5,1.0) {\phantom{\Large$\verifier$}};
    \node (p_audit1) at (14.75,4.5) {\phantom{\Large$\prover$}};
    \node (v_audit1) at (14.75,1.0) {\phantom{\Large$\verifier$}};
    \node (p_audit2) at (17.0,4.5) {\phantom{\Large$\prover$}};
    \node (v_audit2) at (17.0,1.0) {\phantom{\Large$\verifier$}};
    \draw[thick,black,->] (p_audit) -- (v_audit);
    
    \draw[black, thick] (10.5,3.225) rectangle (12.25,1.4749999999999996);
\node[square,minimum size=2pt] at (11.075,2.55) {};
    \node[square,minimum size=2pt] at (11.325,1.85) {};
    \node[triangle,inner sep=0.7pt,minimum size=2pt,fill=blue] at (11.625,2.35) {};
    \node[square,fill=blue,minimum size=2pt] at (11.725000000000001,2.85) {};
    \node[square,fill=blue,minimum size=2pt] at (11.925,2.05) {};
    \node[triangle,inner sep=0.7pt,minimum size=2pt] at (10.825,1.6500000000000004) {};
    \draw[dashed,black,thick] (10.5,1.5999999999999996) -- (12.25,3.1);
    \fill[black,opacity=0.25] (10.5,3.225) rectangle (12.25,1.4749999999999996);
    \node[align=center] at (11.375,3.75) {\scriptsize ZKP DP\\\scriptsize Violation $<\theta$};

    \draw[thick,black,->] (v_audit1) -- (p_audit1);\draw[black, thick] (12.75,3.225) rectangle (14.5,1.4749999999999996);
\node[square,minimum size=2pt] at (13.325,2.55) {};
    \node[square,minimum size=2pt] at (13.575,1.85) {};
    \node[triangle,inner sep=0.7pt,minimum size=2pt,fill=blue] at (13.875,2.35) {};
    \node[square,fill=blue,minimum size=2pt] at (13.975000000000001,2.85) {};
    \node[square,fill=blue,minimum size=2pt] at (14.175,2.05) {};
    \node[triangle,inner sep=0.7pt,minimum size=2pt] at (13.075,1.6500000000000004) {};
    \draw[dashed,black,thick] (12.75,1.5999999999999996) -- (14.5,3.1);
    \fill[black,opacity=0.25] (12.75,3.225) rectangle (14.5,1.4749999999999996);
    \node[align=center] at (13.625,3.75) {\scriptsize Random\\\scriptsize Challenge};
    \draw [teal,thick] (14.075,2.15) rectangle (14.274999999999999,1.9500000000000002);
    \draw[teal,thick] (13.225000000000001,2.65) rectangle (13.425,2.45);
    \draw[teal,thick] (12.975000000000001,1.75) rectangle (13.175,1.5499999999999998);

    \draw[thick,black,->] (p_audit2) -- (v_audit2);\draw[black, thick] (15.0,3.225) rectangle (16.75,1.4749999999999996);
\node[square,minimum size=2pt] at (15.575,2.55) {};
    \node[square,minimum size=2pt] at (15.825,1.85) {};
    \node[triangle,inner sep=0.7pt,minimum size=2pt,fill=blue] at (16.125,2.35) {};
    \node[square,fill=blue,minimum size=2pt] at (16.225,2.85) {};
    \node[square,fill=blue,minimum size=2pt] at (16.425,2.05) {};
    \node[triangle,inner sep=0.7pt,minimum size=2pt] at (15.325,1.6500000000000004) {};
    \draw[dashed,black,thick] (15.0,1.5999999999999996) -- (16.75,3.1);
    \fill[black,opacity=0.25] (15.0,3.225) rectangle (16.75,1.4749999999999996);
    \node[align=center] at (15.875,3.75) {\scriptsize Validation\\\scriptsize Check ZKP};
\draw [teal,thick] (16.325,2.15) rectangle (16.525,1.9500000000000002);
    \draw[teal,thick] (15.475000000000001,2.65) rectangle (15.675,2.45);
    \draw[teal,thick] (15.225000000000001,1.75) rectangle (15.425,1.5499999999999998);
    \node at (16.525,2.15) {\Large\color{green}\checkmark};
    \node at (15.675,2.65) {\Large\color{green}\checkmark};
    \node at (15.425,1.75) {\Large\color{green}\checkmark};

\end{tikzpicture}\clearpage{}
    \caption{Limitations of Offline Fairness Certificates (left) and Overview of \name (right). Offline fairness certificates based on offline audit data fail when facing real-world distribution shifts or audit gaming, leading to significant fairness deviation--Demographic Parity (DP) fairness violation increases from 0 to 1.
    \name issues an online fairness certificate of the black-box service provided to clients reliably, efficiently, and securely. During the service phase, \name authenticates client queries and service provider responses to enable accountability without having to perform client-facing ZKPs. In the audit phase, the service provider and an auditor verify only an asymptotically constant number of client queries while providing provable guarantees on overall group fairness violation.}
    \label{fig:overview}
    \vspace{-10pt}
\end{figure}

\textbf{Unintentional issues of offline fairness certificates}. In practice, offline fairness certificates fail to generalize to unseen data encountered during deployment (e.g., \emph{online client data}) as distribution shift is common in real-world applications~\cite{cotter2019generalizing, biswas2021shifts} and challenging to account for during training~\cite{chen2022fairness}. Distributional shifts can actively reinforce biases~\cite{casper2024black}. As a result, fairness guarantees established offline may not hold once the model is deployed. For example, a previous study of real-world distribution shift on US census data showed that fairness certified under distributions from 2014 failed to provide fairness for data gathered in 2018~\cite{chen2022fairness}.

\textbf{Intentional abuse of offline fairness certificates}. Many approaches for offline fairness certification~\cite{pentyala2022privfair,panigutti2021fairlens,saleiro2018aequitas,casper2024black} can be circumvented entirely via a ``model switching attack'' wherein the service provider uses a fair model during the audit, and a different model during deployment. This attack allows arbitrary alterations to fairness on online client data, and can be concealed very effectively via post-hoc ``fairwashing'' methods~\cite{shahin2022washing,aivodji2019fairwashing}. Service providers can bypass other audit methods~\cite{shamsabadi2022confidential} via a ``data forging attack'', wherein the malicious institution constructs an “easy” offline dataset on which the model satisfies fairness constraints, despite failing to meet those constraints for online data.

We formulate the concept of \textbf{online fairness certificates}, which remediate these issues by measuring fairness directly from the client data received online during deployment. To preserve confidentiality, we issue online fairness certificates using zero-knowledge proofs (ZKPs), a cryptographic method that enables verification of information about hidden data~\cite{goldreich1991proofs,goldwasser2019knowledge}.

Zero-knowledge proofs of fairness \cite{shamsabadi2022confidential,yadav2024fairproof} are an emerging line of work that address weaknesses of both black-box and white-box fairness audits using cryptographic tools. Black-box fairness auditing considers an auditor who has only query access to a model, and so measures fairness based on submitted queries and observed decisions of the service provider. By contrast, white-box fairness auditing approaches release client data and the service provider’s model to the auditor, who can then verify fairness. Both approaches have drawbacks: the white-box setting is seldom used in practice due to issues with confidentiality, while the black-box setting does not provide the auditor enough information for a rigorous audit \cite{casper2024black} (for example, nothing prevents a service provider from using one model during the audit, and a completely different one to answer client queries). Zero-knowledge proofs essentially provide secure “white-box” access to the auditor for a set of pre-specified operations, while retaining (and in fact improving upon) the confidentiality of the black-box setting.

We place two recent works in ZKP auditing into the newly defined category of online fairness certificates: \cite{yadav2024fairproof} proves online certification of individual fairness, and \cite{lycklama2024holding} certifies general audit metrics. While these works are important, their deployment is substantially limited by a lack of scalability -- ZKPs for validating ML services are computationally expensive, often requiring minutes of runtime to answer each client query~\cite{weng2021mystique,yadav2024fairproof,sun2024zkllm}. Here we complement these works with a new method for online certification of group fairness~\cite{dwork2012fairness,hardt2016equality} with \emph{excellent scalability}. Group fairness assesses statistical parity between demographic groups for outcome attribution and error rates. \name achieves its scalability by exploiting a synergy between cryptographic ``cut-and-choose''-style~\cite{zhu2016cut} verification and statistical properties of group fairness.

\myparagraph{Overview of \name}. Our method consists of two phases (Figure~\ref{fig:overview}, second row):
\begin{itemize}[leftmargin=*]
\item \emph{Service Phase.} Clients query the service provider's model, and send the auditor cryptographic commitments to their results. They perform no expensive ZKP operations, offloading them to the service provider and auditor in the next phase.        
\item \emph{Audit Phase.} The service provider commits to a measurement of the fairness metric across all queries in the evaluation set. Then, they verify validity on a randomly sampled subset of the queries. This ``cut-and-choose''-style~\cite{zhu2016cut} verification provides a statistical bound on the group fairness that is robust to arbitrary malicious behavior from the service provider, while maintaining excellent scaling for large numbers of client queries. 
\end{itemize}

\myparagraph{Contributions.} We summarize our contributions as follows.
\begin{itemize}[leftmargin=*]
\item \emph{Online Group Fairness Certificate}. \name audits group fairness over large sets of data received from clients online, rather than assuming fairness will generalize from a static set of offline data.
    \item \emph{Scalability}. We exploit statistical properties of group fairness to audit an arbitrarily large set of queries to large neural networks using a \emph{constant-sized} probabilistic sample (Theorem~\ref{thm:sound-fairness}). On neural networks with 42.5 million parameters, \name achieves 4.4 seconds of amortized runtime per query, of which only 0.23 seconds require the client to be online.
\item \emph{Confidentiality \& Reliability}. Our cryptographic protocols guarantee that i) the auditor learns no information about the evaluation data or model parameters; and ii) the service provider cannot tamper with the audit's measurement of fairness, except within a very small probability and effect size that do not impact practical use. 

\end{itemize}

Our code is publicly available at \url{https://github.com/cleverhans-lab/oath-zk-online-fairness.git}.

\section{Background, Preliminaries \& Related Work}
\label{sec:background}

\myparagraph{ML Preliminaries}. In this work we focus on probabilistic binary classifiers. That is, we consider models that can be represented as mappings $M: \mathcal{X} \times \{0,1\}^k \mapsto \{0,1\}$, where $\mathcal{X}$ is a feature space, and $\{0,1\}^k$ for some $k \in \mathbb{N}$ is the space of random seeds. We assume that one feature of each query point $q \in \mathcal{X}$ corresponds to a binary demographic attribute $\{0,1\} \in S$ such as sex, race, or disability.

\myparagraph{Fairness}. The ML community has proposed various fairness definitions tailored to different philosophical assumptions and contexts. We focus on \emph{group fairness}~\cite{hardt2016equality,chouldechova2016fair,hardt2016equality} which ensures statistical parity across different subgroups. For simplicity, we demonstrate verification of demographic parity~\cite{calders2009building} in the main text, and generalize to other metrics in Appendix~\ref{app:equalized-odds}. In this work we are interested in verifying whether demographic parity is satisfied within a public threshold, as formalized below.

\begin{definition} \label{def:thresh-dp} [Thresholded demographic parity] A predictor $\hat{Y}$ satisfies demographic parity with respect to sensitive attribute $S$ within a public threshold $\theta$ if:
\[  \left| \Pr[\hat{Y} = 1 \vert S = 0] - \Pr[\hat{Y} = 1 \vert S = 1] \right| \leq \theta  \]
\end{definition}

Demographic parity equalizes the probability of providing a positive outcome, $\hat{Y} = 1$ across each subgroups with different demographic variables $0$ and $1$. Therefore, receiving positive outcomes is independent of inclusion in a particular subgroup. This fairness definition is necessary for several life-changing tasks such as loan approval and recruitment so that applicants from different demographic groups have equal access to financial services and have equal chances of being hired~\cite{lambrecht2019algorithmic}.

\myparagraph{Zero-Knowledge Proofs} (ZKPs). A ZKP~\cite{goldreich1991proofs, goldwasser2019knowledge} is a cryptographic protocol conducted between a \emph{prover} $\prover$ and a \emph{verifier} $\verifier$, who both have access to a public circuit $P$. A ZKP allows $\prover$ to convince $\verifier$ that they know some \emph{witness} $w$ such that $P(w)=1$. See Appendix~\ref{app:auth-comm} for formal properties of ZKPs. We use two ZKP protocols from prior work as building blocks for general zero-knowledge boolean circuit evaluation~\cite{weng2020wolverine}, and verified array random-access~\cite{Franzese2021zkram} respectively. The security of our methods follow directly from the security of these underlying building blocks, which are proven secure under the Universal Composability~\cite{canetti2001UC} framework.

We use the notation $\llbracket x \rrbracket$ to indicate that $\prover$ and $\verifier$ hold an IT-MAC-based authentication of $x$. More precisely, it means that $\verifier$ holds a global key $\Delta$ and a message key $K_x$, while $\prover$ holds a message tag $M_x$ and the value $x$. These components hold the algebraic relationship
$$M_x=K_x+x \cdot \Delta$$
as an invariant. 

In this work, whenever an operation on authenticated values is written (e.g. $\llbracket z \rrbracket \gets \llbracket x \rrbracket + \llbracket y \rrbracket$), it means that $\prover$ and $\verifier$ are \emph{working together} to perform special secure operations on their respective pieces of the authenticated values (detailed in \cite{weng2020wolverine,Franzese2021zkram,emp-toolkit}), such that the algebraic relationship is maintained. The relationship is used at the end of the ZKP protocol to check whether $\prover$ performed computation accurately, without the $\verifier$ learning any information about the underlying values. See Appendix~\ref{app:auth-comm} for additional details.  

\myparagraph{Commitments}. We employ a standard hash-based commitment scheme~\cite{katz2014introduction}, which enables a party with an input value $x$ to produce a commitment string $C_x$ that can later be opened to verify the committed value. This prevents $\prover$ from modifying client queries between the Service and Audit phases of our protocol. See Appendix~\ref{app:auth-comm} for details.

\myparagraph{ZKPs of Correct Inference}. A subset of ZKP methods are devoted to efficiently proving that inference was computed correctly given an ML model~\cite{weng2021mystique, chen2024zkml}. We note that while their efficiency is improving substantially, they are still not fit for client-facing usage in many applications, requiring over ten minutes per verified inference for larger models. We use ZKPs of inference generically as a subroutine in our work as a component of proving fairness, conducted while the client is offline.

\myparagraph{Cryptographic Protocols and Fairness}. A few previous works leveraged ZKPs and secure multiparty computation (MPC) to enforce model fairness. Confidential-PROFITT~\cite{shamsabadi2022confidential} provides an efficient ZKP protocol for proving that a decision tree adheres to group fairness constraints at training time. However, Confidential-PROFITT suffers from issues of offline fairness certificates.  FairProof~\cite{yadav2024fairproof}, the work most closely related and complementary to ours, introduces online individual fairness~\cite{dwork2012fairness, john2020verifying} certificates. However, FairProof requires several minutes of runtime per client query even for very small models. \cite{kilbertus2018blind} extended MPC protocols proposed by~\cite{mohassel2017secureml} to enable a service provider to train a fair logistic regression model on clients' data while keeping their demographic attributes confidential. \cite{segal2021fairness} uses MPC to verify fairness of inference. These MPC-based approaches require secure computation between the service provider and \emph{every} client, which limits scalability. Our work achieves greatly improved efficiency by having the service provider \emph{commit} to queries, and prove their validity probabilistically when the clients are offline.

In concurrent work, \cite{zhang2025fairzk} proposed a new zkSNARK-based proof of fairness. While their approach is highly efficient, it relies on proving bespoke bounds for each new model type. In contrast, our work can ``plug and play'' to automatically prove fairness for any model with a zero-knowledge proof of inference. Thus while their results on logistic regression and multi-layered perceptrons are laudable, our work differs in its ability to support proofs of fairness for more complex architectures such as convolutional neural networks as we show in the results section.

See Appendix~\ref{app:notation} for a recap of notation used in the paper.

\section{Problem Formulation}

\myparagraph{Parties and Inputs}. We consider three classes of parties: a service provider or \emph{prover} $\prover$, and a third-party auditor or \emph{verifier} $\verifier$, and a set of \emph{clients} $\{\client_i\}_{i=1}^n$ who possess the online data. Specifically, each client has a query point $q_i \in \mathcal{X}$, which they send to $\prover$ in order to receive a binary decision $o_i \in \{0,1\}$ from model $M$.

\myparagraph{Goal}. The auditor $\verifier$ aims to assess whether $\prover$'s model is fair. Formally, the auditor wants to check whether $M$ satisfies Definition~\ref{def:thresh-dp}, given some public threshold $\theta$ on the fairness gap (set \eg by regulations as an acceptable fairness gap in practice), measured via empirical probabilities over the online data and model decisions.

\myparagraph{Security Model}. Our protocols are secure against a malicious adversary corrupting $\prover$ or $\verifier$. The adversary may perform arbitrary behavior during protocol execution to disrupt it. Nevertheless
it is guaranteed that (i) $\prover$ cannot falsely convince $\verifier$ that the model is fair, except with negligible probability and effect size (characterized in Theorem~\ref{thm:sound-fairness}), and (ii) $\verifier$ learns no information about any data or model parameters other than whether the fairness condition holds.

We assume that each client submits query points that are representative of real features and demographic attributes. Authentication of this information is typically placed outside of scope for algorithmic techniques. Instead it is delegated to outside mechanisms (\eg the financial system authenticates inputs to loan recommendation models via legal liability for perjury). For clarity of notation in this work, we assume that each $\client_i$ submits a single query point $q_i \in \mathcal{X}$, though in practice clients could submit multiple queries.

We assume that none of $\prover$, $\verifier$, or $\{\client_i\}^n_{i=1}$ collude with each other. Severe legal penalties prevent collusion between regulatory bodies and companies, making this reasonable in practice. We assume that all parties have access to secure point-to-point communication channels, and that all $\client_i$'s have been authenticated before protocol initiation -- \ie neither $\mathcal{P}$ nor $\mathcal{V}$ can launch ``Sybil'' attacks wherein they generate fake clients. We operate under the random oracle model for cryptographic hashing. 

\myparagraph{Blame Attribution and Denial of Service}. A malicious $\prover$ or $\client$ may tamper with the message sent to $\verifier$ during the Service Phase, which could be used to launch a denial of service attack. To deter this, we guarantee that if the Audit Phase aborts, either $\prover$ or $\client$ can reveal information which correctly identifies the party responsible for the failure, preventing repeated denial of service.

\section{\name}

The goal of our method is to enable a third-party auditor $\verifier$ to measure the fairness of the service provider $\prover$'s ML model from online data provided by the clients $\{\client_i\}_{i=1}^n$. We use ZKPs to guarantee correct measurement of group fairness, while ensuring that $\verifier$ learns no information about the data or model parameters. \name is made computationally efficient by its ``cut-and-choose''-style~\cite{zhu2016cut} design: during the Service Phase, clients send the auditor binding and hiding cryptographic commitments to the online data and output they received from the service provider. Then in the Audit Phase, the auditor and service provider perform a ZKP to (i) measure group fairness from all online data, and then (ii) check a group-balanced random sample of the online data for tampering. 

\myparagraph{Example Application.} Consider the case of algorithms deployed in criminal justice. Suppose $\prover$ is the service provider of a proprietary ML model used in courtrooms to assign risk scores to defendants. A set of $N$ court offices utilizing the model are the clients $\{C_i\}^N_{i=1}$, and an independent review body seeking to measure fairness of the algorithm is the auditor $\verifier$.

To utilize \name, in the Service Phase each court office collects dossiers from the defendants. The offices submit the dossiers as queries to $\prover$, and send a cryptographic `receipt' of each query to $\verifier$ via Algorithm~\ref{alg:query-auth}. Then in the Audit Phase, $\prover$ and $\verifier$ use Algorithm~\ref{alg:audit} to prove group fairness across all received queries. Using \name in this application accomplishes three main goals: (i) the fairness of $\prover$'s model is measured correctly -- $\prover$ cannot alter their outputs to make the model appear more fair, (ii) intellectual property is protected since neither the court offices nor $\verifier$ learn anything about the model, and (iii) data privacy is protected since the review body obtains no information about the defendants, which is important in many real-world settings where legislation may prevent the sharing of data across jurisdictions.

\subsection{Service Phase} \label{sec:query-auth}

The first part of \name is the Service Phase, wherein each $\client_i$ queries $\prover$'s model, and $\verifier$ receives a cryptographic commitment to each answer. These commitments leak no information to $\verifier$. They will later be used in the Audit Phase to verify the correctness of the fairness audit via a ZKP.

\begin{algorithm}[t!]
\footnotesize
\SetKwComment{Comment}{{\footnotesize\ }}{}
\caption{\name Committed Query Answering.}\label{alg:query-auth}
        \KwIn{Public: security parameter $\lambda$; $\client$: query point $q$; $\prover$: model $M$; $\verifier$: no inputs.
        }
        \KwOut{$\client$: model decision $o$, randomness $r$; $\prover$: query point $q$, randomness $r$; $\verifier$: commitment string $C=H(q||o||r)$.
        }
     $\client$ and $\prover$ generate fair coins $r$\Comment*[r]{}
     $s \gets$ $q.\text{sensitive\_attribute}\in \{0,1\}$ \Comment*[r]{}
     $\client$ samples $\alpha_0, \alpha_1 \samp \{0,1\}^\lambda$\Comment*[r]{}
     $\client$ sends $q || \alpha_s$ to $\prover$\Comment*[r]{}
$\client$ computes $\text{sig}_P \gets \text{Sign}(q ||\alpha_s|| r)$ and sends $\text{sig}_P$ to $\prover$\Comment*[r]{}
     $\prover$ computes $\text{Vrfy}(q || \alpha_s || r, \text{sig}_P)$. Abort if it fails\Comment*[r]{} 
     $\prover$ computes $o \gets M(q, r)$ and $\text{sig}_C \gets \text{Sign}(q || \alpha_s || o || r)$\Comment*[r]{}
     $\prover$ sends $o, \text{sig}_C$ to $\client$\Comment*[r]{}
     $\client$ computes $\text{Vrfy}(q || \alpha_s || o || r, \text{sig}_C)$. Abort if it fails\Comment*[r]{}
     $\client$ computes $C \gets H(q || \alpha_s || o || r)$, and sends $C$ and $\alpha_0,\alpha_1$ to $\verifier$.
\end{algorithm}

Algorithm~\ref{alg:query-auth} performs committed query answering. First, $\client$ and $\prover$ generate fair random coins (obtained using e.g.~\cite{blum1983coin}) to prevent manipulation of the randomness used for the model decision. $\client$ samples two random strings $\alpha_0,\alpha_1$ to encode their sensitive attribute $s$. They send both $\alpha_0,\alpha_1$ to $\verifier$, and only $\alpha_s$ to $\prover$. We note that the purpose of $\alpha_s$ is \emph{not} to hide $s$ from $\prover$ -- $s$ is a component of $q$ which $\prover$ gets to see in plain text. Rather, these values prevent $\prover$ from being able to change $s$ before the audit without getting caught.

In the Audit Phase, $\prover$ will prove in zero-knowledge that the string $\alpha_s$ matches the sensitive attribute they received from $\client$. $\prover$ can only guess $\alpha_{\lnot s}$ with negligible probability, so this prevents them from flipping the client's sensitive attribute bit before the Audit Phase. This enables an efficient sensitive attribute check in Algorithm~\ref{alg:audit}.

$\client$ then sends a signed query to $\prover$, and $\prover$ responds with a model output only if they successfully verify the signature. Likewise, $\prover$ sends to $\client$ a signed query answer, which the client verifies before accepting the output. If both steps succeed, $\client$ sends a commitment to the query and result to $\verifier$. $\client$ also sends $\alpha_0,\alpha_1$, the random strings encoding

\myparagraph{Blame Attribution}. If verification fails in the Audit Phase, the signature on each party's inputs provides a publicly verifiable record of the query which can be revealed to $\verifier$ to identify which party was the source of the dishonest behavior. See Appendix~\ref{app:blame-attribution} for technical details.

\myparagraph{Online Efficiency}. Cryptographic commitments are much less computationally intensive than ZKPs. By supplying $\verifier$ with commitments that can be verified via ZKP without any client input, we obtain client-facing efficiency comparable to ML as a service query answering without cryptographic verification. By contrast, performing a ZKP of fair inference online with each client requires runtimes on the order of minutes per client with state of the art methods, even on tiny neural networks~\cite{yadav2024fairproof}. Our commit online, prove offline design dramatically improves usability and scalability.

At the end of the Service Phase, $\verifier$ will have obtained a commitment $C_i$ and sensitive attribute check strings $\alpha_{0}^i,\alpha_{1}^i$ corresponding to the query and answer of each $\{\client_i\}^n_{i=1}$. In the Audit Phase, these committed queries are assessed for fairness.

\subsection{Audit Phase} \label{sec:audit}

The second phase of \name measures group fairness from committed online data via a suite of ZKP operations conducted solely between $\prover$ and $\verifier$ while $\client$ stays offline. Our protocols guarantee correctness of the fairness measurement, and prevent $\verifier$ from learning any additional information about the data or model parameters.

\looseness=-1 Algorithm~\ref{alg:audit} begins with $\prover$ making IT-MAC-based authentications of all query-answer tuples $(q_i, r_i, o_i) \in Q$. This enables assessment of fairness in lines 6-11, via verified estimation of {$\Pr[o_i=1 | s_i = 0]$} and {$\Pr[o_i=1 | s_i = 1]$} (line 11) on all queries in $Q$. Line 12 samples $\nu$ queries uniformly from each group (i.e. $\nu$ with $s_i=0$ and $\nu$ with $s_i=1$). Algorithm~\ref{alg:balanced-sample} details a ZKP subroutine for group-balanced sampling, placed in Appendix~\ref{app:group-balanced-sample} for brevity. 

\myparagraph{Validity Checks}. We implement three validity checks to deter $\prover$ from modifying $Q$:
\begin{itemize}[leftmargin=*]
    \item \textit{Sensitive Attribute Check} -- $\prover$ fails if they try to modify the sensitive attribute of any client.
    \item \textit{Commitment Consistency Check} -- $\prover$ fails if they modified any part  of the query sent by $\client_i$ during the Service Phase, for all $\client_i$ selected by the group-balanced uniform sample.
    \item \textit{Inference Correctness Check} -- $\prover$ fails if they gave $\client_i$ any output $o_i \neq M(q_i,r_i)$ during the Service Phase, for all $\client_i$ selected by the group-balanced uniform sample.
\end{itemize}
In line 15 $\prover$ verifies that the $\llbracket \alpha^i_s \rrbracket$ authenticated in Step 1 is equal to at least one of the sensitive attributes committed to the client, without revealing which one. Since $\client_i$ only sends $\alpha^i_s$ to $\prover$ in the Service Phase and $\verifier$ waits until \emph{after} $\llbracket \alpha^i_s \rrbracket$ is authenticated to send $\alpha^i_0,\alpha^i_1$, a malicious $\prover$ can only falsify this check if they guess $\alpha^i_{\lnot s}$. This occurs only with negligible probability. This approach to verification is similar to the generation of random labels for wire values in garbled circuits, a classical cryptographic technique~\cite{yao86gc}. Lines 16 and 17 verify that the sensitive attribute encoded by $\llbracket \alpha^i_s \rrbracket$ is consistent with the sensitive attribute used in the rest of the protocol.

Line 19 verifies commitment consistency by proving that the values $\prover$ used in the Audit Phase are consistent with the values that $\client_i$ committed to in the Service Phase. Line 20 verifies that $o_i$ was the proper result of inference with the authenticated model $M(q_i, r_i)$ by using a ZKP of correct inference $\mathcal{F}_{\text{inf}}$ as a subroutine. In line 21, the indicator bit $b_{\text{pass}}$ is revealed as long as the ZKP protocol was carried out correctly and all of the checks in lines 13-20 pass. This signals to $\verifier$ whether the fairness condition (Definition~\ref{def:thresh-dp}) holds, within a probabilistic bound characterized in the next section.

\setcounter{AlgoLine}{0}
\begin{algorithm}[t!]
\footnotesize
\SetKwComment{Comment}{{\scriptsize$\triangleright$\ }}{}
\SetKwComment{CommentR}{{\scriptsize\ }}{}
\caption{\name Zero-Knowledge Fairness Audit.}\label{alg:audit}
        \KwIn{\emph{public}: the number of client queries $n$, fairness gap threshold $\theta$, soundness parameter $\nu$; $\prover$: model $M$, online data {$Q=\{(q_i, \alpha_s^i, o_i, r_i)\}_{i=1}^n$}; $\verifier$: commitments $\{C_i\}_{i=1}^n$, sensitive attribute check strings $\{(\alpha_0^i,\alpha_1^i)\}^n_{i=1}$
        }
        \KwOut{$\verifier$ obtains $b_{\text{pass}} \in \{0,1\}$ indicating whether $M$ satisfies demographic parity with respect to $Q$.}
    \tcp{\small Step 1: Initialization}
    \For{$i \in [1, n]$}{
        $\prover$ authenticates $(\llbracket q_i \rrbracket, \llbracket \alpha^i_s \rrbracket, \llbracket r_i \rrbracket, \llbracket o_i \rrbracket)$\CommentR*[r]{}
    }
    $\prover$ authenticates $\llbracket M \rrbracket$\CommentR*[r]{}
    $\prover$ authenticates $\llbracket c_{0} \rrbracket$ and $\llbracket c_{1} \rrbracket$ initialized to zero\Comment*[r]{\scriptsize Count positive outcomes in each demographic group}
    $\prover$ authenticates $\llbracket n_{0} \rrbracket$ and $\llbracket n_{1} \rrbracket$ initialized to zero\Comment*[r]{\scriptsize Count individuals in each demographic group}
    \tcp{\small Step 2: Measuring Group Fairness}
    \For{$i \in [1,n]$}{
        $\llbracket s_i \rrbracket \gets$  $\llbracket q_i.\text{demographic\_attribute} \rrbracket$\CommentR*[r]{}
        $\llbracket b_0 \rrbracket \gets (\llbracket s_i \rrbracket == 0$), \quad $\llbracket b_1 \rrbracket \gets (\llbracket s_i \rrbracket == 1)$\Comment*[r]{\scriptsize Indicator bit for demographic attribute}
        $\llbracket n_0 \rrbracket \gets \llbracket n_0 \rrbracket + \llbracket b_0 \rrbracket$, \quad $\llbracket n_1 \rrbracket \gets \llbracket n_1 \rrbracket + \llbracket b_1 \rrbracket$ \Comment*[r]{\scriptsize Update group counts}
        $\llbracket c_0 \rrbracket \gets \llbracket c_0 \rrbracket + ( \llbracket b_0 \rrbracket \cdot \llbracket o_i \rrbracket)$, \quad $\llbracket c_1 \rrbracket \gets \llbracket c_1 \rrbracket + ( \llbracket b_1 \rrbracket \cdot \llbracket o_i \rrbracket) $\Comment*[r]{\scriptsize Update positive outcome counts}
}
    Fairness gap computation and comparison to the threshold
    $
    \llbracket b_{\text{pass}} \rrbracket \gets \left( \theta \geq \left| \frac{\llbracket c_0 \rrbracket}{\llbracket n_0 \rrbracket} - \frac{\llbracket c_1 \rrbracket}{\llbracket n_1 \rrbracket} \right| \right)
    $\CommentR*[r]{}
    \tcp{\small Step 3: Validity Checks}
$S \leftarrow \text{Group-Balanced Uniform Sampling}(Q,\nu)$\Comment*[r]{\scriptsize An array indicating selected samples (Algorithm~\ref{alg:balanced-sample})}
    \For{$i \in [1, n]$}{ \tcp{\small Sensitive Attribute Check}
    $\verifier$ sends $\alpha_0^i,\alpha_1^i$ to $\prover$ \CommentR*[r]{}
        $\prover$ proves $\left((\llbracket \alpha_s^i \rrbracket == \alpha_0^i) \oplus (\llbracket \alpha_s^i \rrbracket == \alpha_1^i)\right) == 1$ \CommentR*[r]{}
        $\llbracket s_i' \rrbracket \gets (\llbracket \alpha_s^i \rrbracket == \alpha_1^i )$ \CommentR*[r]{}
        $\prover$ proves $\llbracket s_i \rrbracket == \llbracket s'_i \rrbracket$ \CommentR*[r]{}
    \If{$\texttt{Reveal}(\llbracket S[i] \rrbracket)==1$}{
\tcp{\small Commitment Consistency Check}
    $\prover$ proves $H(q_i||\alpha_s^i ||r_i||o_i) == C_i$\CommentR*[r]{} 
    \tcp{\small Inference Correctness Check}
    $\prover$ proves $\llbracket o_i \rrbracket == \llbracket M \rrbracket (\llbracket q_i \rrbracket, \llbracket r_i \rrbracket)$ using $\mathcal{F}_{\text{inf}}$\CommentR*[r]{}
    }
    }
    If any of the proofs fail, abort. Otherwise, $\texttt{Reveal}(\llbracket b_{\text{pass}} \rrbracket)$
\end{algorithm}

\subsection{Analysis of Probabilistic Audit}
\label{sec:robustanalysis}

To achieve highly efficient amortized runtime and communication, the zero-knowledge audit in Algorithm~\ref{alg:audit} performs a \emph{probabilistic} check on audit integrity by verifying a random subset of queries sampled from $Q$. Our analysis shows that the probability with which a malicious $\prover$ can wrongly demonstrate that their model is fair decreases exponentially as a function of sample size.

We proceed by defining an ideal measure of fairness -- the group fairness gap that would be computed if $\prover$ was fully honest -- and comparing it to the group fairness gap computed when a malicious $\prover$ submits a modified collection of queries as input to Algorithm~\ref{alg:audit}. We then analyze the probability that $\prover$ is able to execute Algorithm~\ref{alg:audit} without detection when these quantities differ by some $\epsilon > 0$.

\begin{definition}\label{def:fairness-deviation}
    \looseness=-1 \textbf{Honest \& Measured Fairness Gaps.} Let $(q_i, r_i, o_i) \in Q_{h}$ be the collection of query-answer tuples input to Algorithm~\ref{alg:audit} assuming a perfectly honest $\prover$, which notably implies: i) queries are always answered using correct computations of inference $M$; and ii) no modifications have been made to the queries recorded in the Service Phase.
 
    We will refer to $X_h$, demographic parity gap computed on $Q_{h}$, as the \textbf{honest fairness gap}. We have
    \begin{equation}
    \nonumber
    \small X_h = \left| \frac{\sum_{(q_i, r_i, o_i) \in Q_{h}} o_i \cdot I_a(s_i) }{\sum_{(q_i, r_i, o_i) \in Q_{h}} I_a(s_i)} - \frac{\sum_{(q_i, r_i, o_i) \in Q_{h}} o_i \cdot I_b(s_i) }{\sum_{(q_i, r_i, o_i) \in Q_{h}} I_b(s_i)} \right|.
    \end{equation}
    Where the indicator $I_a$ is 1 if $s_i==a$ and 0 otherwise.
    
    Let $X_m$, the \textbf{measured fairness gap}, refer to the same quantity but computed on an arbitrary collection of query-answer tuples of a malicious $\prover$'s choosing $Q_m$. \begin{equation}
    \nonumber
    \small X_m = \left| \frac{\sum_{(q_i, r_i, o_i) \in Q_{m}} o_i \cdot I_a(s_i) }{\sum_{(q_i, r_i, o_i) \in Q_{m}} I_a(s_i)} - \frac{\sum_{(q_i, r_i, o_i) \in Q_{m}} o_i \cdot I_b(s_i) }{\sum_{(q_i, r_i, o_i) \in Q_{m}} I_b(s_i)} \right|.
    \end{equation}
\end{definition}

\begin{theorem}
\label{thm:sound-fairness}
    Let $X_h$ be the honest group fairness gap and $X_m$ the measured group fairness gap computed during Algorithm~\ref{alg:audit}. Consider $\epsilon > 0$, the deviation between these quantities caused by a malicious $\prover$ cheating, defined: $\epsilon = \left| X_h - X_m \right|.$ Then $\prover$ is caught with probability 
    \begin{align*}
        p_{\text{catch}} \geq 1 - \left(1 - \frac{\epsilon}{2}\right)^{\nu},
    \end{align*}
    \looseness=-1 where $\nu$ is the number of queries uniformly sampled within each group via Algorithm~\ref{alg:balanced-sample} (a total of $2\nu$). 
\end{theorem}

The proof is in Appendix~\ref{app:proof-sound-fairness}. The intuition is that the larger $Q$ is, the more queries $\prover$ must maliciously modify to alter the group fairness gap (Definition~\ref{def:fairness-deviation}). This in turn increases the probability that a modified query shows up in the verified random sample, resulting in $\prover$ getting caught.

Theorem~\ref{thm:sound-fairness} shows that the probability that $\prover$ can escape detection for altering $\verifier$'s measurement of the group fairness gap declines exponentially as a function of the number of sampled queries $2\nu$. The base of the exponent increases with the magnitude of the alteration $\epsilon$. This bounds the probability that $\prover$ escapes detection at any given $\nu$. Thus, we can parameterize the audit to make it highly improbable that $\prover$ could impact the measured fairness by amounts that matter in practice. We conduct empirical analysis of parameter settings in Section~\ref{sec:evaluation}.

\section{Empirical Evaluation}
\label{sec:evaluation}

\myparagraph{Objectives.} We empirically evaluate the efficiency, scalability, and correctness of \name for providing a repeatable fairness audit of ML-based services while protecting the confidentiality of evaluation data and model parameters. 

\myparagraph{Datasets.} We consider five common datasets for fairness benchmarking (described in Appendix~\ref{app:dataset}): COMPAS~\cite{angwin2016machine}, Crime~\cite{communities_and_crime}, Default Credit~\cite{Default_credit}, Adult~\cite{Adult_income} and German Credit~\cite{Dua:2019}.

\myparagraph{Implementation.} We implement \name in C++ using EMP-toolkit~\cite{emp-toolkit} (under an MIT License). All experiments were conducted by locally simulating the parties on a MacBook Pro laptop CPU. 

\myparagraph{Models.} \name uses a zero-knowledge proof of correct inference $\mathcal{F}_\text{inf}$ as a subroutine in order to accommodate arbitrary binary classifiers. We evaluate \name using three different settings for $\mathcal{F}_\text{inf}$: (i) logistic regression (LR) implemented in EMP-toolkit, (ii) a small feed-forward neural network (FFNN) with ReLU activations suitable for tabular data implemented in EMP-toolkit, (iii) larger neural networks suitable for image data using Mystique~\cite{weng2021mystique} with three different networks LeNet-5 (62K parameters), ResNet-50 (23.5 mil parameters), and ResNet-101 (42.5 mil parameters). 

\myparagraph{Baselines.} We compare the efficiency of \name against a baseline method for confidential online fairness certificates of group fairness. The runtime for this baseline is estimated by (i) computing a ZKP of correct inference using $\mathcal{F}_\text{inf}$ for each query, and (ii) proving that the fairness metric is satisfied over all queries. We compute this baseline for LR, FFNN, and larger neural networks using Mystique~\cite{weng2021mystique}.  We also compare to FairProof~\cite{yadav2024fairproof}, a ZKP method for online certificates of \emph{individual} fairness. We emphasize that individual fairness and group fairness are distinct metrics, and as such \name and FairProof are performing \emph{distinct, complementary} tasks. We compare against FairProof to give context for the computational overhead of online fairness certification.

\begin{table}[t]
    \scriptsize
    \centering
        \caption{Mean \& SD runtimes across 3 trials of the Service and Audit Phases of \name when applied to a Logistic Regression (LR) or Feed-Forward Neural Network (FFNN). For the Service Phase, we report the runtime (s) per query for each client $\client_i$. For the Audit Phase, which occurs between the service provider ($\prover$) and the auditor ($\verifier$), we report the total runtime (min) across $10^6$ client queries.}
    \begin{tabular}{l|ccc}
    \Xhline{3\arrayrulewidth} 
    \multirow{2}{*}{Dataset} & \multirow{2}{*}{Service (s/query)} & \multicolumn{2}{c}{Audit (min)} \\
    &  & LR & FFNN\\
    \Xhline{3\arrayrulewidth} 
    COMPAS           & $0.25 \pm 0.04$  & $82.8 \pm 0.01$ & $93.1 \pm 0.29$ \\
    Crime           & $0.23 \pm 0.02$ & $83.3 \pm 0.02$ & $101.8 \pm 0.6 $  \\
    Credit  &      $0.23 \pm 0.03$ & $83.3 \pm 0.03$ & $103.2 \pm 1.0$ \\
    Adult           & $0.22 \pm 0.02$  & $83.1 \pm 0.02$ & $97.9 \pm 0.8$ \\
    German          & $0.23 \pm 0.03$ & $84.7 \pm 0.06$ & $123.3 \pm 1.4$ \\
      \Xhline{3\arrayrulewidth} 
    \end{tabular}

    \label{tab:dataset-runtime}
\end{table}

\subsection{Efficiency of Certifying Standard Fairness Benchmarks}

\name consists of a client-facing Service Phase, and an Audit Phase conducted solely between the service provider and auditor after many client queries have been aggregated. Table~\ref{tab:dataset-runtime} shows the runtime of each phase for all five datasets across multiple runs. Overall, the run times are practically efficient even for very large numbers of queries. 

\myparagraph{Service Phase Runtime}. The first column of Table~\ref{tab:dataset-runtime} reports Service Phase runtime, the time required to provide a committed response to one client query using $\prover$'s model. The Service Phase has a negligible and consistent runtime across all datasets ranging from 0.22 to 0.25 seconds. This consistent efficiency is due to the small amounts of communication involved in the Service Phase, making the bottleneck the round time rather than the amount of communication. This makes the runtime appear independent of the number of attributes and the type of model used by the prover across all the datasets in these experiments. \emph{The excellent efficiency of the Service Phase means that \name can be integrated into deployed ML pipelines without disrupting service.}

\looseness=-1 \myparagraph{Audit Phase Runtime}. The second and third columns of Table~\ref{tab:dataset-runtime} report Audit Phase runtime when \name is verifying either a LR or FFNN model respectively. This is the time required to audit the fairness of authenticated client query answers with our ZKP protocols. We assume $10^6$ total client queries, $7600$ of which are randomly selected for consistency and correctness checks. This size of the random sample was identified empirically as a good tradeoff between efficient runtime and strong tamper protection (see Section~\ref{sec:eval-robustness} for details). Figure~\ref{fig:audit-phase-ablation} presents an ablation study of runtimes for the different components of the Audit Phase for logistic regression. The Correctness Check (Algorithm~\ref{alg:audit} line 20) dominates the runtime. See Appendix~\ref{app:audit-phase-breakdown} for a detailed breakdown of Audit Phase runtime. Though the Audit Phase is more computationally intensive, it is \emph{highly practical for periodic auditing.}  

\begin{figure}[t]
\begin{minipage}[t!]{0.45\textwidth}
\setlength{\tabcolsep}{0.1pt}
\begin{tabular}{ll}
\begin{tikzpicture}
\begin{axis}[cycle list name=color list,
                  axis lines=left, 
                  width=4cm,
                  height=4cm,
                  ymin=0.00,
                  ymax=70.00,
                  enlarge x limits=0.04,
                  ylabel={\scriptsize Running time (s)},
                  xlabel={\scriptsize Online Queries $|Q|$},
                  xticklabel style={rotate=45, anchor=north east,font=\scriptsize},
                  yticklabel style={font=\scriptsize},
                  scaled ticks=false,
                  xtick={200000,400000,...,1000000},
                  ]

\addplot+[black,line width=2pt]
              table[x=X,y=l] {benchmarkgroupsampling.txt};
\addplot+[red,line width=2pt, error bars/.cd,y fixed,y dir=both, y explicit]
              table[x=X,y=Y,y error=eY] {benchmarkfairness.txt};
\legend{\scriptsize Sampling, \scriptsize Fairness}
\end{axis}
\end{tikzpicture}&
\begin{tikzpicture}
\begin{axis}[cycle list name=color list,
                  axis lines=left, 
                  width=4cm,
                  height=4cm,
                  ymin=0.00,
                  ymax=6000,
                  enlarge x limits=0.04,
                  xlabel={\scriptsize Verified Queries $2\nu$},
                  scaled ticks=false,
                  xticklabel style={rotate=45, anchor=north east,font=\scriptsize},
                  yticklabel style={font=\scriptsize},
                  xtick={2000,4000,...,10000},
                  ]

\addplot+[blue,line width=2pt, error bars/.cd,y fixed,y dir=both, y explicit]
              table[x=X,y=Y,y error=eY] {benchmarkcorrectness.txt};
\addplot+[green,line width=2pt]
              table[x=X,y=Y] {benchmarkconsistency.txt};              
\legend{\scriptsize Correctness, \scriptsize Consistency}
\end{axis}
\end{tikzpicture}
\end{tabular}
\caption{Ablation study of Audit Phase runtime for $\name_\text{LR}$: (left) group-balanced uniform sampling and fairness metric computation total online queries $|Q|$ varies; (right) correctness check and consistency check as a function of number of verified user queries. }
\label{fig:audit-phase-ablation}
\end{minipage}
\hfill
\begin{minipage}[t]{0.45\textwidth}
\scriptsize
\setlength{\tabcolsep}{0.1pt}
\begin{tikzpicture}
    \node[inner sep=0pt] (whitehead) at (2.2,1.7)
          {\scriptsize $\theta=0.15$};
\begin{axis}[cycle list name=color list,
        small,
        axis lines=left, 
        width=6cm,
        height=4cm,
        ymin=0.00,
enlarge y limits=0.01,
        enlarge x limits=0.02,
        xlabel={\scriptsize Verified Queries $2\nu$},
        ylabel={\scriptsize Fairness gap},
        smooth,
        scaled ticks=false,
        y tick label style={
        /pgf/number format/.cd,
        fixed,
        fixed zerofill,
        precision=2,
        /tikz/.cd,
        font=\scriptsize
        },
        xticklabel style={font=\scriptsize},
]
  
    \addplot+[red,name path=B1] table[x=n,y=e]{eps_deviation.txt};
    \addplot+[red,name path=B2] table[x=n,y=e]{eps_deviation_min.txt};
    \addplot+[green,name path=B3] table[x=n,y=h]{eps_deviation_min.txt};
    \addplot[fill=red] fill between[of=B1 and B2];
    \addplot[fill=green] fill between[of=B3 and B2];
\addplot[dotted,line width=1pt,black] coordinates {(0,0.15)(20000,0.15)};
 \end{axis}
\end{tikzpicture}
\caption{Green: models underneath the example fairness threshold $\theta=0.15$; Red: models placed over the by cheating which escape detection with greater than $1\%$ probability. The possible $\epsilon$ deviation that escapes detection decreases exponentially with number of verified queries by Theorem~\ref{thm:sound-fairness}.}
\label{fig:epsilon-theta-deviation}
\end{minipage}
\end{figure} 
\subsection{Parametrization of Probabilistic Audit}
\label{sec:eval-robustness}
Section~\ref{sec:robustanalysis} proves an efficiency / reliability trade-off: the more verified queries $2\nu$, the smaller the probability that $\prover$ can alter the audit's fairness measurement by $\epsilon$ without being caught. As shown in the previous section, the runtime bottleneck of \name scales with the number of verified queries. Therefore, it is important to identify parameters that limit the required runtime for the consistency check while ensuring a reliable measurement of fairness.

For example, setting $\nu=1000$ means that any $\epsilon \geq 0.01$ evades detection with probability at most $\num{6.65e-3}$. For $\nu=3800$, the same $\epsilon$ evades with probability at most $\num{5.34e-9}$. We select this value as a good tradeoff between efficiency and reliability. See Figure~\ref{fig:epsilon-theta-deviation} for possible $\epsilon$ deviations from an example group fairness threshold $\theta$ at varying numbers of verified queries. Figure~\ref{fig:Cheating} in Appendix~\ref{app:addtl-cheating} gives additional details. \emph{These results show that \name is robust: a service provider that cheats with a larger than negligible $\epsilon$ will be caught by the auditor with high probability.}

\subsection{Scalability for Neural Networks}

\begin{figure}
\small
\centering
\subfloat[Total Runtime]{\hspace{-0.4in} \begin{tabular}{ll}
\begin{tikzpicture}
\begin{axis}[cycle list name=color list,
                  small,
                  axis lines=left, 
                  name=ax1,
                  width=4cm,
                  height=4cm,
                  ymin=0.00,
                  ymax=8000,
                  enlarge x limits=0.04,
                  xlabel={\scriptsize Online Queries $|Q|$},
                  scaled ticks=false,
xticklabel style={font=\scriptsize},
                  yticklabel style={font=\scriptsize},
                  xtick={10000,30000,50000},
                  ]

\addplot+[red,line width=2pt, error bars/.cd,y fixed,y dir=both, y explicit]
              table[x=X,y=Y] {scalability_r50_baseline.txt};
\addplot+[purple,line width=2pt, error bars/.cd,y fixed,y dir=both, y explicit]
              table[x=X,y=Y] {scalability_r101_baseline.txt};
\addplot+[cyan,line width=2pt]
              table[x=X,y=Y] {scalability_r50_oath.txt};
\addplot+[green,line width=2pt]
              table[x=X,y=Y] {scalability_r101_oath.txt};    
\end{axis}
\begin{axis}[cycle list name=color list,
                  small,
                  axis lines=left, 
                  at={(ax1.south west)},
                  anchor=south east,
                  xshift=-1.25cm,
                  width=4cm,
                  height=4cm,
                  ymin=0.00,
                  ymax=100.00,
                  enlarge x limits=0.04,
                  ylabel={\scriptsize Running time (hr)},
                  xlabel={\scriptsize Online Queries $|Q|$},
                  legend style={font=\scriptsize, at={(0.0,1.1)},anchor=south west,legend columns=3},
                  scaled ticks=false,
xticklabel style={font=\scriptsize},
                  yticklabel style={font=\scriptsize},
                  xtick={10000,30000,50000},
                  ]

\addplot+[orange,line width=2pt]
              table[x=X,y=Y] {scalability_lenet_baseline.txt};
\addlegendimage{red,line width=2pt}
\addlegendimage{purple,line width=2pt}
\addplot+[teal,line width=2pt, error bars/.cd,y fixed,y dir=both, y explicit]
              table[x=X,y=Y] {scalability_lenet_oath.txt};
\addlegendimage{cyan,line width=2pt}
\addlegendimage{green,line width=2pt}
\legend{\scriptsize $\text{BL}_\text{LN-5}$, \scriptsize $\text{BL}_\text{RN-50}$,\scriptsize $\text{BL}_\text{RN-101}$,\scriptsize $\name_\text{LN-5}$, \scriptsize $\name_\text{RN-50}$, \scriptsize $\name_\text{RN-101}$}
\end{axis}
\end{tikzpicture}
\end{tabular}}
\subfloat[Amortized runtime for $|Q|=10^6$]{
\scriptsize
\begin{tabular}{l|l|ll}
    \Xhline{3\arrayrulewidth} 
    & \multirow{2}{*}{\scriptsize \#Param} & \multicolumn{2}{c}{Time (s/query)} \\
    &  & Online \quad  &  Total  \\
    \Xhline{3\arrayrulewidth} 
    \scriptsize FairProof & 130 & 236.4 & $316.8$  \\
    $\name_{\text{FFNN}}$ & 130 & \textbf{0.23} & \textbf{0.235}  \\
    \hline
    Baseline  & 62K & $5.9$ & $5.9$  \\
    $\name_{\text{LN-5}}$ & 62K & \textbf{0.23} & \textbf{0.28} \\
    \hline
    Baseline  & 23.5M & $333$ & $333$  \\
    $\name_\text{RN-50}$ & 23.5M & \textbf{0.23} & \textbf{2.90} \\
    \hline
    Baseline   & 42.5M & $535$ & $535$   \\
    $\name_\text{RN-101}$ & 42.5M & \textbf{0.23} &   \textbf{4.43} \\
      \Xhline{3\arrayrulewidth} 
\end{tabular}}
    \caption{Scalability of \name compared to baseline with varying neural network sizes. Runtimes are estimated using ZKP inference times from~\cite{weng2021mystique}. The baseline approach uses ZKP verified inference with each client (as in~\cite{yadav2024fairproof,lycklama2024holding}) followed by verified group fairness computation over all queries.}
    \label{fig:nn-scalability}
\end{figure}

Figure~\ref{fig:nn-scalability} shows experiments and estimated runtimes of \name compared to neural network baselines. \name is more efficient than the baselines by orders of magnitude. Especially important is the client-facing runtime, since this is arguably the most relevant factor for enabling the integration of ZKP auditing into ML pipelines. In this category, \name gives at least a \emph{thousand-fold} runtime improvement over the alternatives in three out of four categories, primarily due to the fact that it requires no client-facing ZKP. Even in terms of total amortized runtime, \emph{\name provides a substantial increase in efficiency in all categories due to our probabilistic auditing method.}

\section{Conclusion}

In this study we present \name, the first method for privacy-preserving certificates of online group fairness. In particular, we use zero-knowledge proofs which make it possible to measure fairness directly from evaluation data, thus removing onerous resource requirements for third-party auditors and the generalization error incurred by using validation sets. Further, our method makes dramatic improvements to scalability in comparison to baselines. See Appendix~\ref{app:limitations} for limitations. In future work, it may be interesting to extend cut-and-choose style ZKPs to obtain efficient proofs of other online properties, such as security, robustness, and/or calibration.

\section*{Acknowledgements}

Olive Franzese's work on this project was primarily supported by a Brave Software Internship. Olive Franzese was also supported by the National Science Foundation Graduate Research Fellowship Grant No. DGE-1842165. Resources used in preparing this research were provided, in part, by the Province of Ontario, the Government of Canada through CIFAR, and companies sponsoring the Vector Institute.

\bibliographystyle{plain}
\bibliography{neurips_bib}

\appendix

\newpage
\appendix

\section{Comparison of fairness auditing approaches}
\label{app:compextra}

We compared against existing fairness auditing approaches (summarised in Table~\ref{tab:Comparision}), namely Black-Box audits~\cite{panigutti2021fairlens,pentyala2022privfair,saleiro2018aequitas} as well as both previous works that leveraged ZKPs to enforce model fairness (FairProof~\cite{yadav2024fairproof} and Confidential-PROFIT~\cite{shamsabadi2022confidential}). We also created a baseline based on ZKP of correct inference, Mystique~\cite{weng2021mystique}. 

Black-box fairness auditing considers an auditor who has only query access to a model, and so measures fairness based on submitted queries and observed decisions of the service provider. By contrast, white-box fairness auditing approaches release client data and the service provider’s model to the auditor, who can then verify fairness. Both approaches have drawbacks: the white-box setting is seldom used in practice due to issues with confidentiality, while the black-box setting does not provide the auditor enough information for a rigorous audit~\cite{casper2024black} (for example, nothing prevents a service provider from using one model during the audit, and a completely different one to answer client queries). Zero-knowledge proofs of fairness~\cite{shamsabadi2022confidential,yadav2024fairproof} are an emerging line of work that address both of these weaknesses using cryptographic tools. Zero-knowledge proofs essentially provide secure “white-box” access to the auditor for a set of pre-specified operations, while retaining (and in fact improving upon) the confidentiality of the black-box setting.

\begin{table}[t]
\centering
\small
\caption{Comparison of fairness auditing approaches in terms of fairness definition (individual; group; various), fairness certificate (online; offline), reliability and efficiency.} 
\setlength{\tabcolsep}{2pt}
\begin{tabular}{l|l|l|c|c}
\Xhline{3\arrayrulewidth}
Approach & \textbf{Fairness} & \textbf{Certificate} & \textbf{Reliable} & \textbf{Efficient} \\    
\Xhline{3\arrayrulewidth}
 Black-Box~\cite{panigutti2021fairlens,pentyala2022privfair,saleiro2018aequitas}  &  Various & Online & - & \checkmark  \\
 \hline
 Confidential-PROFIT~\cite{shamsabadi2022confidential} & Group & Offline &  - &  - \\
\hline
FairProof~\cite{yadav2024fairproof} &  Individual  & Online & \checkmark & - \\
\hline
Fairness Baseline w/ Mystique~\cite{weng2021mystique} &  Various  & Online & \checkmark & - \\
\hline
\name &  Group  & Online &  \checkmark &  \checkmark \\

\Xhline{3\arrayrulewidth}
\end{tabular}

\label{tab:Comparision}
\end{table}

\section{Certifying Other Group Fairness Metrics}
\label{app:equalized-odds}

Here we present a modified version of the Audit Phase which measures equalized odds~\cite{hardt2016equality}, another group fairness metric. Rather than measure the difference in positive outcomes between demographic groups as in demographic parity, equalized odds measures the difference in false positive and false negative rate between demographic groups. 

\begin{definition} \label{def:thresh-eo}
[Thresholded equalized odds] A predictor $\hat{Y}$ satisfies equalized odds with respect to sensitive attribute $S$ within a public threshold $\theta$ if both of the following hold:
\[|\Pr[\hat{Y}=1\mid y=0\cap s=a]-\Pr[\hat{Y}=1\mid y=0\cap s=b]| \leq \theta \qquad \forall a, b \in S\]
\[|\Pr[\hat{Y}=1\mid y=1\cap s=a]-\Pr[\hat{Y}=1\mid y=1\cap s=b]| \leq \theta \qquad \forall a, b \in S.\]
\end{definition}

Computing equalized odds requires knowledge of the ground-truth labels for the data used. In certain applications such as finance, hiring, and medicine, ground truth labels may be readily available at audit time. For example, it is easy to determine whether a loan was repaid, whether an employee was promoted, or whether a patient's health metrics improved, at a later time point after the model makes a prediction. We abstract these sources of information via an oracle that gives access to ground truth labels, and present a version of \name modified for equalized odds auditing in Algorithm~\ref{alg:eo-audit}. Other than the addition of the oracle in lines 4-5, it is highly similar to Algorithm~\ref{alg:audit}, except that it computes empirical false positive and false negative rates for each group, which we abstract as a subroutine (Algorithm~\ref{alg:eo-metric}). 

\setcounter{AlgoLine}{0}
\begin{algorithm}[t!]
\SetKwComment{Comment}{{\scriptsize$\triangleright$\ }}{}
\SetKwComment{CommentR}{{\scriptsize\ }}{}
\caption{\name Zero-Knowledge Equalized Odds Audit.}\label{alg:eo-audit}
        \KwIn{\emph{public}: the number of client queries $n$, fairness gap threshold $\theta$, soundness parameter $\nu$, ground truth oracle $\mathcal{O}$; $\prover$: model $M$, evaluation data {$Q=\{(q_i, \alpha_s^i, o_i, r_i)\}_{i=1}^n$}; $\verifier$: commitments $\{C_i\}_{i=1}^n$, sensitive attribute check strings $\{(\alpha_0^i,\alpha_1^i)\}^n_{i=1}$
        }
        \KwOut{$\verifier$ obtains $b_{\text{pass}} \in \{0,1\}$ indicating whether $M$ satisfies demographic parity with respect to $Q$.}
    \tcp{\small Step 1: Initialization}
    \For{$i \in [1, n]$}{
        $\prover$ authenticates $(\llbracket q_i \rrbracket, \llbracket \alpha^i_s \rrbracket, \llbracket r_i \rrbracket, \llbracket o_i \rrbracket)$\CommentR*[r]{}
    }
    $\prover$ authenticates $\llbracket M \rrbracket$\CommentR*[r]{}
    \For{$i\in [1,n]$}{
        $\llbracket g_i\rrbracket \gets \mathcal{O}(\llbracket q_i\rrbracket)$
    }
    \tcp{\small Step 2: Fairness Verification}
    $\llbracket b_{\text{pass}}\rrbracket \gets \text{Equalized-Odds}(n, \theta, \llbracket Q\rrbracket,\{\llbracket g_i\rrbracket\}_{i=1}^n)$
    \CommentR*[r]{}
    \tcp{\small Step 3: Validity Checks}
$S \leftarrow \text{Group-Balanced Uniform Sampling}(Q,\nu)$\Comment*[r]{\scriptsize An array indicating selected samples (Algorithm~\ref{alg:balanced-sample})}
    \For{$i \in [1, n]$}{ \tcp{\small Sensitive Attribute Check}
    $\verifier$ sends $\alpha_0^i,\alpha_1^i$ to $\prover$ \CommentR*[r]{}
    $\prover$ proves $\left((\llbracket \alpha_s^i \rrbracket == \alpha_0^i) \oplus (\llbracket \alpha_s^i \rrbracket == \alpha_1^i)\right) == 1$ \CommentR*[r]{}
    $\llbracket s_i' \rrbracket \gets (\llbracket \alpha_s^i \rrbracket == \alpha_1^i )$ \CommentR*[r]{}
    $\prover$ proves $\llbracket s_i \rrbracket == \llbracket s_i' \rrbracket$ \CommentR*[r]{}
\If{$\texttt{Reveal}(\llbracket S[i] \rrbracket)==1$}{
\tcp{\small Commitment Consistency Check}
    $\prover$ proves $H(q_i||\alpha_s^i ||r_i||o_i) == C_i$\CommentR*[r]{} 
    \tcp{\small Inference Correctness Check}
    $\prover$ proves $\llbracket o_i \rrbracket == \llbracket M \rrbracket (\llbracket q_i \rrbracket, \llbracket r_i \rrbracket)$ using $\mathcal{F}_{\text{inf}}$\CommentR*[r]{}
    }
    }
    If any of the proofs fail, abort. Otherwise, $\texttt{Reveal}(\llbracket b_{\text{pass}} \rrbracket)$
\end{algorithm} 
\setcounter{AlgoLine}{0}
\begin{algorithm}[t!]
\SetKwComment{Comment}{{\scriptsize$\triangleright$\ }}{}
\SetKwComment{CommentR}{{\scriptsize\ }}{}
\caption{\name Equalized Odds Calculation.}\label{alg:eo-metric}
        \KwIn{\emph{public}: the number of client queries $n$, fairness gap threshold $\theta$; $\prover$: authenticated evaluation data {$\llbracket Q\rrbracket =\{(\llbracket q_i\rrbracket, \llbracket\alpha_s^i\rrbracket, \llbracket o_i\rrbracket, \llbracket r_i\rrbracket)\}_{i=1}^n$}, authenticated ground truth labels {$G=\{\llbracket g_i\rrbracket\}_{i=1}^n$}
        }
        \KwOut{authenticated $\llbracket b_{\text{pass}} \rrbracket \in \{0,1\}$ indicating whether the decisions in $Q$ satisfy demographic parity.}
    $\prover$ authenticates $\llbracket P_0 \rrbracket, \llbracket P_1\rrbracket, \llbracket N_0\rrbracket$ and $\llbracket N_{1} \rrbracket$ initialized to zero\Comment*[r]{\scriptsize Count positive and negative outcomes in each demographic group}
    $\prover$ authenticates $\llbracket TP_{0} \rrbracket, \llbracket FP_0\rrbracket, \llbracket TP_1\rrbracket$ and $\llbracket FP_{1} \rrbracket$ initialized to zero\Comment*[r]{\scriptsize Count true and false positives in each demographic group}
    \For{$i \in [1,n]$}{
        $\llbracket s_i \rrbracket \gets$  $\llbracket q_i.\text{demographic\_attribute} \rrbracket$\CommentR*[r]{}
        $\llbracket b_0 \rrbracket \gets 1-s_i$, \quad $\llbracket b_1 \rrbracket \gets s_i$\Comment*[r]{\scriptsize Indicator bit for demographic attribute}
        $\llbracket P_0 \rrbracket \gets \llbracket P_0 \rrbracket + (\llbracket b_0 \rrbracket \cdot \llbracket g_i\rrbracket)$, \quad $\llbracket P_1 \rrbracket \gets \llbracket P_1 \rrbracket + (\llbracket b_1 \rrbracket\cdot \llbracket g_i\rrbracket)$ \Comment*[r]{\scriptsize Update positive counts}
        $\llbracket N_0 \rrbracket \gets \llbracket N_0 \rrbracket + (\llbracket b_0 \rrbracket \cdot (1-\llbracket g_i\rrbracket))$, \quad $\llbracket N_1 \rrbracket \gets \llbracket N_1 \rrbracket + (\llbracket b_1 \rrbracket\cdot (1-\llbracket g_i\rrbracket))$ \Comment*[r]{\scriptsize Update negative counts}
        $\llbracket TP_0 \rrbracket \gets \llbracket TP_0 \rrbracket + ( \llbracket b_0 \rrbracket \cdot \llbracket o_i\rrbracket\cdot \llbracket g_i\rrbracket)$, \quad $\llbracket TP_1 \rrbracket \gets \llbracket TP_1 \rrbracket + ( \llbracket b_1 \rrbracket \cdot \llbracket o_i \rrbracket\cdot \llbracket g_i\rrbracket) $\Comment*[r]{\scriptsize Update true positive counts}
        $\llbracket FP_0 \rrbracket \gets \llbracket FP_0 \rrbracket + ( \llbracket b_0 \rrbracket \cdot \llbracket o_i\rrbracket\cdot (1-\llbracket g_i\rrbracket))$, \quad $\llbracket FP_1 \rrbracket \gets \llbracket FP_1 \rrbracket + ( \llbracket b_1 \rrbracket \cdot \llbracket o_i \rrbracket\cdot (1-\llbracket g_i\rrbracket)) $\Comment*[r]{\scriptsize Update false positive counts}
}
    Fairness gap computation and comparison to the threshold
    $
    \llbracket b_{\text{pass}} \rrbracket \gets \left( \theta \geq \left| \frac{\llbracket TP_a \rrbracket}{\llbracket P_a \rrbracket} - \frac{\llbracket TP_b \rrbracket}{\llbracket P_b \rrbracket} \right| \right)\cdot\left( \theta \geq \left| \frac{\llbracket FP_a \rrbracket}{\llbracket N_a \rrbracket} - \frac{\llbracket FP_b \rrbracket}{\llbracket N_b \rrbracket} \right| \right)
    $\CommentR*[r]{}
    Return $\llbracket b_{\text{pass}} \rrbracket$
\end{algorithm} 
\section{Details of Cryptographic Preliminaries}
\label{app:auth-comm}
\subsection{Properties of Zero-Knowledge Proofs}
A zero-knowledge proof (ZKP)~\cite{goldreich1991proofs, goldwasser2019knowledge} is a cryptographic protocol conducted between a \emph{prover} $\prover$ and a \emph{verifier} $\verifier$, who both have access to a public circuit $P$. A ZKP allows $\prover$ to convince $\verifier$ that they know some \emph{witness} $w$ such that $P(w)=1$.
ZKPs have the following properties: 
\begin{itemize}[leftmargin=*]
    \item \emph{Completeness}: For any $w$ such that $P(w)=1$, $\prover$ can use the ZKP protocol to convince an honest $\verifier$ that $P(w)=1$;
 \item \emph{Soundness}: Given $w$ such that $P(w) \neq 1$, a malicious $\prover$ cannot use the ZKP protocol to falsely convince $\verifier$ that $P(w)=1$; 
 \item \emph{Zero Knowledge}: The protocol reveals no information to even a malicious $\verifier$ about $w$ (other than what can be inferred from knowing that $P(w)=1$).
\end{itemize}

\subsection{IT-MAC Authentication} \label{app:itmac} 
We follow the methods for IT-MAC-based authentication in~\cite{weng2020wolverine,nielsen2012itmac,damgaard2012itmac}. In more detail, we fix a field $\mathbb{F}_p$ for some prime $p \in \mathbb{N}$, and an extension field $\mathbb{F}_{p^r} \supseteq \mathbb{F}_p$. The notation $\llbracket x \rrbracket$ means that $\verifier$ holds a uniform global key $\Delta \in \mathbb{F}_{p^r}$ and for each authenticated value a uniform local key $\mathbf{K}_x \in \mathbb{F}_{p^r}$, while $\prover$ holds the value $x \in \mathbb{F}_p$ and a uniform tag $\mathbf{M}_x \in \mathbb{F}_{p^r}$, over which the following algebraic relationship holds:
\begin{align*}
    \mathbf{M}_x = \mathbf{K}_x + \Delta \cdot x \in \mathbb{F}_{p^r}
\end{align*}
when for the purposes of the equality, $x$ is represented in the natural way as an element of $\mathbb{F}_{p^r}$. $\prover$ can ``open'' an IT-MAC authenticated value by sending $x$ and $\mathbf{M}_x$ to $\verifier$, who can then verify that the relationship holds. If $\prover$ has modified $x$ after authentication, it will only pass verification with negligible probability. The generic ZKP protocol in~\cite{weng2020wolverine} works by authenticating the input values to a circuit, and then updating them in ways that preserve the algebraic relationship, which can only succeed if both $\prover$ and $\verifier$ agree to perform the update (again, except with negligible probability). IT-MACs are \emph{homomorphic} for addition and scalar multiplication, which means that if $\prover$ and $\verifier$ agree, they can update authenticated values with these operations locally without any communication. This makes these operations very efficient in IT-MAC-based ZKPs. 

To provide a concrete example, we will briefly demonstrate how a secure addition is computed. Suppose $\prover$ and $\verifier$ have authenticated values $\llbracket x \rrbracket$ and $\llbracket y \rrbracket$. This means that $\prover$ holds $x,y$ and message tags $M_x$ and $M_y$, while $\verifier$ holds a global key $\Delta$ and message keys $K_x$ and $K_y$, with the algebraic relationships $M_x=K_x + \Delta \cdot x$ and $M_y = K_y + \Delta \cdot y$. To compute a secure addition, $\prover$ and $\verifier$ must obtain $\llbracket z \rrbracket$ with the underlying value $z=x+y$ and a message key and message tag such that $M_z = K_z + \Delta \cdot z$.

Observe that by adding the respective components of $\llbracket x \rrbracket$ and $\llbracket y \rrbracket$ we can see that the following two equalities are equivalent
\begin{align*}
M_x + M_y &= K_x + \Delta \cdot x + K_y + \Delta \cdot y \\
(M_x + M_y) &= (K_x + K_y) + \Delta \cdot (x + y).
\end{align*}

Thus $\prover$ and $\verifier$ can achieve the goal of obtaining $\llbracket z \rrbracket$ as follows: $\prover$ computes $z\gets x + y$, and $M_z \gets M_x + M_y$, while $\verifier$ computes $K_z \gets K_x + K_y$.

We will also overload the double bracket notation to include collections of values: if $v$ is a vector and $A$ a matrix, $\llbracket v \rrbracket$ and $\llbracket A \rrbracket$ mean that each value in $v$ and $A$ is individually authenticated. 

In summary, IT-MAC-based authentication provides a way to prevent $\prover$ from modifying values except in ways that are approved by $\verifier$, even while keeping those values hidden. They also enable efficient computation with our ZKP building blocks~\cite{weng2020wolverine,Franzese2021zkram}, however they require a global key $\Delta$ to be shared among all authenticated values for efficiency. This makes them cumbersome in applications with many parties. Accordingly, to prevent $\prover$ from modifying the queries obtained by the many client parties, we use cryptographic commitments rather than IT-MAC authentication.

\subsection{Hash-Based Commitments} \label{app:hash-comm}
A commitment scheme enables a party with an input value $x$ to produce a commitment string $C_x$ which can later be opened to verify that $x$ has not changed. They possess the following properties:
\begin{itemize}[leftmargin=*]
    \item \emph{Binding.} If $\prover$ commits to $x$ and then modifies $x$ in any way without $\verifier$'s knowledge, verification of the commitment string will fail.
    \item \emph{Hiding.} The commitment string reveals no information about $x$ to $\verifier$.
\end{itemize}
A standard approach~\cite{katz2014introduction} for instantiating a commitment scheme is to use a random oracle hash $H$, with
\begin{align*}
    C_x := H(x||r)
\end{align*}
giving the commitment string for any $x \in \{0,1\}^*$, where $r \in \{0,1\}^k$ is a randomly chosen bit string for some $k \in \mathbb{N}$. Verification is conducted by computing the hash again given $x$ and $r$, and confirming that it is equal to $C_x$ produced by the party giving the commitment. In this work, we use a ZKP protocol to confirm that $C_x = H(x || r)$ without having to send $x$ to $\verifier$. Thus we are able to verify that client queries have not been altered without revealing the queries to $\verifier$. We use a scheme from~\cite{weng2021mystique} to enable conversion of hash-based commitments to IT-MAC authentications within a ZKP protocol to enable this.

\subsection{Digital Signatures}
\label{app:signatures}
A digital signature scheme guarantees that a signed message was sent by a sender with a known public key. Formally, a digital signature scheme consists of two algorithms:

\begin{itemize}
    \item A key generation algorithm Gen$(1^\lambda) \rightarrow (sk, pk)$ which, given a security parameter generates a secret key and public key pair
    \item A signing algorithm Sign$_{sk}(m) \rightarrow \text{sig}$ which takes a secret key and message, and generates a signature
    \item A verification algorithm Vrfy$_{pk}(m, \text{sig}) \rightarrow \{0,1\}$ which takes a public key, a signature, and a message, and outputs a boolean value indicating whether the signature is valid (1 for valid, 0 for invalid).
\end{itemize}
A signature is valid if the pair $(sk, pk)$ was obtained from Gen, and sig was obtained from Sign$_{sk}(m)$. Informally, a secure signature scheme ensures that for any $m$, no adversary can forge sig' such that Vrfy$_{pk}(m, \text{sig'})$ outputs 1 unless they have access to the $sk$ corresponding to $pk$. See~\cite{katz2014introduction} for formal details.

\section{Notation}
\label{app:notation}
Table~\ref{tab:Notation} shows the notation used throughout this paper.

\begin{table}[H]\caption{Notation table.}
\begin{center}
\setlength{\tabcolsep}{2pt}
\begin{tabular}{r p{8cm} r p{5cm}}

& Meaning &  & Meaning\\
\hline
$\mathcal{P}$  & Prover&
$\mathcal{V}$  & Auditor\\
$\mathcal{C}$  & Client & $M$ & Model \\
$D$  & Training Dataset & $Q$ & Evaluation dataset\\
 $\theta$ & Group fairness gap threshold & $\llbracket x \rrbracket$ & Authentication of value $x$ \\  $S$  & Binary demographic attribute & $n$ & The number of client queries \\
 $\mathcal{X}$ & Feature space & $\hat{Y}$ & Predictor  \\
 $q$ & Client's query & $P$ & Public circuit \\
 $w$ & Witness & $o$ & Model binary decision\\
  $\nu$ & Number of sampled queries from each demographic group & $||$ & concatenation operator \\
 $\oplus$ & XOR operator & & \\
\end{tabular}
\end{center}
\label{tab:Notation}
\end{table}

\section{Details of Blame Attribution}
\label{app:blame-attribution}
If any of the proofs in Algorithm~\ref{alg:audit} fail, then the protocol aborts (line 21). To prevent denial of service attacks in real world applications, it is necessary to find out which party caused the failure. Notably, both the service provider $\prover$ and the client $\client$ have incentives to deny service: a malicious $\prover$ may seek to obscure the fact that they have a noncompliant model, and a malicious $\client$ may seek retribution for an undesirable model prediction. Here we formalize a method for correctly identifying which party is to blame if Algorithm~\ref{alg:audit} ends in an abort.

Since clients do not participate in the Audit Phase, they cannot interfere with any of the proofs before the validity checks (lines 13-20). Thus if protocol failure happens here, $\prover$ is to blame. The interesting cases occur during the validity checks, where for some client $C_i$, $\prover$ is verifying the sensitive attribute label $\alpha_s^i$, consistency of the commitment $C_i$, and correctness of the inference $o_i =M(q_i, r_i)$. If these proofs fail, then there are two possibilities: (i) $\prover$ modified the $i^{th}$ entry of $Q$ between the Service Phase and the Audit Phase, or (ii) $\client_i$ submitted an invalid $\alpha^i_0, \alpha^i_1$ or $C_i$ to $\verifier$ during the Service Phase to intentionally disrupt the Audit Phase.

Algorithm~\ref{alg:blame-attr} gives a protocol for post-hoc blame attribution. It relies on the security of the digital signature scheme used during the Service Phase (Algorithm~\ref{alg:query-auth}): no malicious $\client$ can forge a signature $\text{sig}^i_C$ such that Vrfy$(q_i||\alpha_s^i||o_i||r_i, \text{sig}_{C}^i)$ will verify under $\prover$'s public key (Appendix~\ref{app:signatures}). Thus, if line 3 succeeds it must indicate that $(q_i,\alpha_s^i,o_i,r_i)$ was exactly the query-response tuple given to $\client_i$ in line 8 of Algorithm~\ref{alg:query-auth}. Thus if it matches up with the commitment $C_i$ given to $\verifier$ in line 10 of Algorithm~\ref{alg:query-auth}, and $\alpha_s$ indexes the proper sensitive attribute, then $\client_i$ must have carried out Algorithm~\ref{alg:query-auth} honestly. This means that protocol failure in Algorithm~\ref{alg:audit} must have been caused by a malicious $\prover$.

\begin{algorithm}[t!]
\footnotesize
\SetKwComment{Comment}{{\footnotesize\ }}{}
\caption{Blame Attribution}\label{alg:blame-attr}
        \KwIn{Public: list of indices $I$ for all clients which failed validity checks in Algorithm~\ref{alg:audit}, $\prover$'s public key $pk$, list of sensitive attribute labels $L=\{(\alpha^i_0,\alpha^i_1)\}_{i \in |Q|}$, list of commitments $C = \{C_i\}_{i \in |Q|}$; $\client_i$: query $(q_i, \alpha^i_s, o_i, r_i)$ and $\prover$ signature $\text{sig}_{C}^i$; $\verifier$ no inputs. }
        \KwOut{a bit $b_i$ for each $i \in I$, $0$ indicates $\client_i$ malfeasance, $1$ indicates $\prover$ malfeasance.
        }
     \For{$i \in I$}{
     $\client_i$ reveals $(q_i, \alpha^i_s, o_i, r_i)$ and $\prover$ signature $\text{sig}_{C}^i$ \Comment*[r]{}
     $\verifier$ checks Vrfy$(q_i||\alpha_s^i||o_i||r_i, \text{sig}_{C}^i)$ using $\prover$'s public key $pk$ \Comment*[r]{}
     $\verifier$ checks Open$(q_i||\alpha_s^i||o_i||r_i, C_i)$. \Comment*[r]{} 
     $\verifier$ checks that $\alpha^i_s=\alpha^i_1$ or $\alpha^i_s = \alpha^i_0$ and that $s$ is equal to the sensitive attribute reported in $q_i$ \Comment*[r]{}
     If any of the checks fail, $\verifier$ sets $b_i\gets 0$, otherwise $b_i\gets 1$
     }
\end{algorithm}

We note that Algorithm~\ref{alg:blame-attr} requires $\client_i$ to reveal their data to $\verifier$. This can be remediated by instead using a zero-knowledge proof of digital signature verification, which has been explored in previous work. We reserve this possibility for future work.

\section{ZKP Verification of Group-Balanced Uniform Sample} \label{app:group-balanced-sample}

Algorithm~\ref{alg:balanced-sample} gives an interactive ZKP protocol for verifying that a sample contains $\nu$ uniform samples of points from each demographic group. Theorem~\ref{thm:balanced-sample} characterizes its security guarantees.

\section{Proof of Soundness}
\label{app:proof-sound-fairness}
\begin{theorem}
\label{thm:restate-sound-fairness}
    Let $X_h$ be the honest group fairness gap and $X_m$ the measured group fairness gap computed during Algorithm~\ref{alg:audit}. Consider $\epsilon > 0$, the deviation between these quantities caused by a malicious $\prover$ cheating, defined:
    \begin{align*}
        \epsilon = \left| X_h - X_m \right|.
    \end{align*}
    Then $\prover$ is caught with probability 
    \begin{align*}
        p_{\text{catch}} \geq 1 - \left(1 - \frac{\epsilon}{2}\right)^{\nu},
    \end{align*}
    \looseness=-1 where $\nu$ is the number of queries uniformly sampled within each group via Algorithm~\ref{alg:balanced-sample} (a total of $2\nu$).
\end{theorem}

\begin{proof}
    We begin by considering available options for $\prover$ to dishonestly influence $X_m$ (i.e. the quantity compared to $\theta$ in line 11 of Algorithm~\ref{alg:audit}). Since any deviations influencing the computation of $c_0, c_1, n_0, n_1, S$ in lines 6-11 will be caught by the underlying ZKP protocols \cite{weng2020wolverine, Franzese2021zkram}, $\prover$ can only cause a deviation in $X_m$ by behaving dishonestly to alter the query-answer tuples $(q_i, r_i, o_i)$ before they are committed during line 2. 
    
    This can happen in two ways: (i) by breaching \emph{correctness} of the outcome given to $\client_i$ during Algorithm~\ref{alg:query-auth} (i.e. $o_i \neq M_{\tau}(q_i, r_i)$), or (ii) by breaching \emph{consistency} (i.e. $(q_i, r_i, o_i)$ provided as input during Algorithm~\ref{alg:audit} line 2 $\neq (q, r, o)$ returned to $\client_i$ during Algorithm~\ref{alg:query-auth}. A query-answer tuple breaching either condition will be caught by lines 18-20 of Algorithm~\ref{alg:audit} if it is sampled in $S$ during line 12.

    Thus, in order to assess $\prover$'s likelihood of being caught we must understand the number of modified queries required to produce a deviation between $X_h$ and $X_m$ of size $\epsilon$. Since the Sensitive Attribute Check (lines 14-17) causes the audit to fail if $\prover$ modifies the sensitive attribute, we can proceed by just considering the case where $\prover$ only modifies $o_i$ from $(q_i, r_i, o_i) \in Q_m$. Let $n_0$ be the number of query points in $Q_h$ such that $s_i == 0$, and let $n_1$ be defined similarly. Note that in the case where $\prover$ only modifies $o_i$, these quantities are exactly the same when defined over $Q_m$. 
    
    Flipping any $o_i$ from $Q_h$ can produce at most $\frac{1}{n_0}$ alteration of $X_m$ if $s_i=0$, or $\frac{1}{n_1}$ alteration if $s_i=1$. This is because the numerator term changes by at most $1$ per query. This means that in order to create a deviation of size $\epsilon$, $\prover$ must modify at least $p_0 \cdot n_0 + p_1 \cdot n_1$ queries for some $p_0, p_1 \in [0, 1]$, where 
    \begin{align*}
        \epsilon \leq p_0 + p_1.
    \end{align*}
    This necessarily implies that either $p_0 \geq \frac{\epsilon}{2}$ or $p_1 \geq \frac{\epsilon}{2}$. Without loss of generality, assume the former. 
    
    Then if we uniformly sample $\nu$ queries with $s_i=0$, the probability that \emph{none} of them violate correctness or consistency (and $\prover$ evades being caught) is given by
    \begin{align*}
        \Pr[\text{no modified queries in sample} ] &= (1-p_0)^{\nu} \\
        &\leq (1-\frac{\epsilon}{2})^{\nu}.
    \end{align*}
    Since a symmetrical analysis holds for $p_1$, taking a $\nu$-sized uniform sample of queries from \emph{both} groups (for a total of $2\nu$ verified queries) gives us the bound stated in the theorem. 

    Cases where $\prover$ instead modifies $q_i$ or $r_i$ reduce to the case of modifying $o_i$. So this completes the proof. 
\end{proof}

\setcounter{AlgoLine}{0}
\begin{algorithm}[t]
\SetKwComment{Comment}{{\footnotesize$\triangleright$\ }}{}
\SetKwComment{CommentR}{{\scriptsize\ }}{}
\caption{Group-Balanced Uniform Sample} \label{alg:balanced-sample}
        \KwIn{
        \begin{itemize} \setlength\itemsep{-0.25em}
        \item The number of queries answered in the Service Phase, $n$, the soundness parameter, $\nu$, which controls how many queries should be sampled per group; 
        \item $\prover$ holds an $n$-sized array of committed client queries $\llbracket Q \rrbracket$, where $n_0$ entries are from clients in Group $0$, and $n_1$ are from clients in Group $0$, with $n_0 + n_1 = n$; and 
        \item $\verifier$ holds commitments to the queries $\llbracket Q \rrbracket$. 
        \end{itemize}
        }
        \KwOut{$\verifier$ and $\prover$ respectively receive commitments and committed values for an array of indicator bits $\llbracket S \rrbracket$ that encode which values of $Q$ are selected in the sample.}
\BlankLine
$\prover$ sends $n_0, n_1$ to $\verifier$\CommentR*[r]{}
     $\verifier$ sends $\prover$ $\pi_a$, a random permutation of the integers in $[1, n_0]$, and $\pi_b$ a random permutation of $[1, n_1]$\CommentR*[r]{}
     $\prover$ loads $\pi_a$ into a read-only ZKRAM $R_a$ such that $R_a[i] = \llbracket \pi_a(i) \rrbracket$ $\forall i \in [1, n_0]$. Set $R_a[0] = \llbracket \bot \rrbracket$. $\prover$ loads $\pi_b$ into $R_b$ similarly, and sets $R_a[0] = \llbracket \bot \rrbracket$.\CommentR*[r]{}
    
$\prover$ commits to a group counter $\llbracket c_0 \rrbracket$ and initializes it to $1$.
     $\prover$ initializes a group-specific permutation array $A_a$ with $n$ entries.
    \For{$i \in [1, n]$}{
         $\llbracket b_a \rrbracket \gets$ indicator bit for $\llbracket q_i.\texttt{demographic\_attribute} \rrbracket == a$\CommentR*[r]{}
         $\llbracket i' \rrbracket \gets \llbracket b_a \rrbracket \cdot \llbracket c_0 \rrbracket $ \Comment*[r]{\scriptsize $c_0$ if $q_i$ is in group $a$, 0 otherwise.}
         $A_a[i] \gets R_a[\llbracket i' \rrbracket]$\Comment*[r]{\scriptsize$\llbracket \pi_a(c_0) \rrbracket$ if $q_i$ is in group $a$, $\llbracket \bot \rrbracket$ otherwise.}
         $\llbracket c_0 \rrbracket \gets \llbracket c_0 \rrbracket + \llbracket b_a \rrbracket$
    }
     Obtain a group-specific permutation array $A_b$ by repeating lines 4-7 but replacing Group $a$ with Group $b$.\CommentR*[r]{}
Initialize an array $S$ of size $n$ containing bits indicating whether query $q_i$ is in the sample.\CommentR*[r]{}
    \For{$i \in [0, n)$}{
         $S[i] \gets  \llbracket A_a[i] < \nu \rrbracket \text{ | } \llbracket A_b[i] < \nu \rrbracket$ \Comment*[l]{\scriptsize $q_i$ is in sample if it's a member of Group $a$ and $\pi_a(i) < \nu$, or it's a member of Group $b$ and $\pi_b(i)<\nu$.}
    }
     return $S$.
\end{algorithm}

\setcounter{AlgoLine}{0}
\begin{algorithm}[t]
\SetKwComment{Comment}{{\footnotesize$\triangleright$\ }}{}
\SetKwComment{CommentR}{{\scriptsize\ }}{}
\caption{Fairness Audit w/o $S$ Reveal.}\label{alg:audit-noleak}
        \KwIn{
        \begin{itemize} \setlength\itemsep{-0.25em}
        \item Public values: $n$ is the number of client queries, $\theta$ thresholds the acceptable level of group fairness gap, and $\nu$ is a soundness parameter.
        \item $\prover$ holds a committed model $M$, and a set of queries authenticating values $(q_i, r_i, o_i) \in Q$ with $i \in [1,n]$.
        \item $\verifier$ holds hash-based commitments $C_i=H(q_i||r_i||o_i)$ with $i \in [1,n]$.    
        \end{itemize}
        }
\tcp{\small Step 1: Initialization}
    \For{$i \in [1, n]$}{
        $\prover$ authenticates $(\llbracket q_i \rrbracket, \llbracket r_i \rrbracket, \llbracket o_i \rrbracket)$\CommentR*[r]{}
    }
$\prover$ authenticates $\llbracket M \rrbracket$\CommentR*[r]{}
    $\prover$ authenticates $\llbracket c_{a} \rrbracket$ and $\llbracket c_{b} \rrbracket$ initialized to zero\Comment*[r]{\scriptsize Count positive outcomes in each demographic group}
    $\prover$ authenticates $\llbracket n_{a} \rrbracket$ and $\llbracket n_{b} \rrbracket$ initialized to zero\Comment*[r]{\scriptsize Count individuals in each demographic group}
    \tcp{\small Step 2: Measuring Group Fairness}
\For{$i \in [1,n]$}{
        $\llbracket s_i \rrbracket \gets$  $\llbracket q_i.\text{demographic\_attribute} \rrbracket$\CommentR*[r]{}
        $\llbracket b_a \rrbracket \gets (\llbracket s_i \rrbracket == a$), \quad $\llbracket b_b \rrbracket \gets (\llbracket s_i \rrbracket == b)$\Comment*[r]{\scriptsize Indicator bit for demographic attribute}
        $\llbracket n_0 \rrbracket \gets \llbracket n_0 \rrbracket + \llbracket b_a \rrbracket$, \quad $\llbracket n_1 \rrbracket \gets \llbracket n_1 \rrbracket + \llbracket b_b \rrbracket$ \Comment*[r]{\scriptsize Update group counts}
        $\llbracket c_0 \rrbracket \gets \llbracket c_0 \rrbracket + ( \llbracket b_a \rrbracket \cdot \llbracket o_i \rrbracket)$, \quad $\llbracket c_1 \rrbracket \gets \llbracket c_1 \rrbracket + ( \llbracket b_b \rrbracket \cdot \llbracket o_i \rrbracket) $\Comment*[r]{\scriptsize Update positive outcome counts}
}
$\prover$ shows that demographic parity gap is underneath threshold $\theta$ by proving 
    \begin{align*}
    \llbracket b_\text{pass} \rrbracket \gets \left( \theta \geq \left| \frac{\llbracket c_0 \rrbracket}{\llbracket n_0 \rrbracket} - \frac{\llbracket c_1 \rrbracket}{\llbracket n_1 \rrbracket} \right| \right)
    \end{align*}
    
    \tcp{\small Step 3: Validity Checks}
$S \leftarrow \text{Group-Balanced Uniform Sampling}(Q,\nu)$\Comment*[r]{\scriptsize An array indicating selected samples (Algorithm~\ref{alg:balanced-sample})}
     Initialize committed counter $\llbracket x \rrbracket$ to $1$\CommentR*[r]{}
     $\verifier$ publishes all $C_i$, $\prover$ loads them into a read-only ZKRAM~\cite{Franzese2021zkram} $Z$\CommentR*[r]{}
     Initialize a read/write ZKRAM~\cite{Franzese2021zkram} $R$ with $(2\nu+1) \cdot sz$ entries, where $sz$ is the number of values required to store a query-answer tuple $(q, r, o)$ plus its index $i \in [1, n]$. Initialize the tuple-sized block starting at $R[0]$ with $\bot$\CommentR*[r]{} 
    \For{$i \in [1, n]$}{
         $\llbracket b \rrbracket \gets S[i]$\CommentR*[r]{} 
         $R[\llbracket x \rrbracket \cdot \llbracket b \rrbracket] \gets $ first value in $(q_i || r_i || o_i) || i$, $R[\llbracket x \rrbracket \cdot \llbracket b \rrbracket + 1] \gets $ second value in $(q_i || r_i || o_i) || i$, ..., $R[\llbracket x \rrbracket \cdot \llbracket b \rrbracket + sz] \gets $ $sz^{th}$ value in $(q_i || r_i || o_i) || i$\CommentR*[r]{}
         $\llbracket x \rrbracket \gets \llbracket x \rrbracket + \llbracket b \rrbracket$.
    }
    \For{$i \in [1, 2 \nu]$}{ $(\llbracket q \rrbracket, \llbracket r \rrbracket, \llbracket o \rrbracket, \llbracket \text{ind} \rrbracket) \gets$ load from $R[i]$\CommentR*[r]{}
     $\prover$ proves $\llbracket o \rrbracket == M (\llbracket q \rrbracket, \llbracket r \rrbracket)$ using $\mathcal{F}_{\text{inf}}$\CommentR*[r]{}
    
     $\prover$ proves that $H(\llbracket q \rrbracket || \llbracket r \rrbracket || \llbracket o \rrbracket) == Z(\llbracket \text{ind} \rrbracket) $. }
    If any of the proofs fail, abort. Otherwise, $\texttt{Reveal}(\llbracket b_{\text{pass}} \rrbracket)$
\end{algorithm}

\begin{theorem}
    \label{thm:balanced-sample}
    The sample array $S$ output by Algorithm~\ref{alg:balanced-sample} has the following guarantees: 
    \begin{enumerate}[leftmargin=*]
        \item Queries are selected uniformly within each group. Formally, consider $i \in I_0$, the subset of all indices such that the query-answer tuple $(q_i, r_i, o_i) \in Q$ has sensitive attribute $s_i=0$. Then 
        \begin{align*}
            \Pr[S[i]=1] = \Pr[S[j]=1] \quad \forall i,j \in I_0.
        \end{align*} 
        
        And similarly for queries with sensitive attribute $1$.
        \item Exactly $\nu$ queries in group $a$ and $\nu$ queries in group $1$ are included in the sample, for a total of $2 \nu$ selected queries. Formally, entries in $S$ are binary-valued, with $\sum_{i \in I_0} S[i] = \nu$.
        And similarly for queries with sensitive attribute $1$.
        \item (Informal) Algorithm~\ref{alg:balanced-sample} reveals no information to $\verifier$ other than $n_0$ and $n_1$, the number of queries made by members of each group.
    \end{enumerate}
\end{theorem}

\begin{proofsketch}
    In detail, Algorithm~\ref{alg:balanced-sample} randomly permutes all members of group $0$ and places the first $\nu$ of them in the sample, and also randomly permutes the all members of group $1$ and places the first $\nu$ of them in the sample. To accomplish this operation in zero-knowledge on a committed array of queries, $\verifier$ constructs two random permutations $\pi_0, \pi_1$  of the appropriate sizes and sends them to $\prover$. Each permutation is then loaded into a read-only ZKRAM (see ~\cite{Franzese2021zkram} for details), which will allow $\prover$ to read permuted elements from $\pi_0$ in the case that they have sensitive attribute $s=0$ and $\bot$ otherwise (and similarly for group $1$), without revealing to $\verifier$ which case is occurring. We use $\bot$ as a symbol to represent the fact that a query is outside of the group relevant to a permutation. 

In lines 4-7, the parties perform a linear traversal of all queries in $Q$, updating a committed counter $\llbracket c_0 \rrbracket$ whenever a query from group $0$ is encountered, and either placing $\pi_0(c_0)$ in an array $A_0$ or $\bot$ depending on group membership via a read to the ZKRAM. At the end of step 10, every query in $Q$ has a corresponding entry in $A_0$ which is $\llbracket \pi_0(j) \rrbracket$ if $q_i \in$ group $0$, (where $q_i$ is the $j^{th}$ query that belongs to group $0$), and $\llbracket \bot \rrbracket$ otherwise. We also construct $A_1$ symmetrically for group $1$. Thus, each member of group $0$ is labeled with an output from the random permutation $\pi_0$, and each member of group $1$ is labeled with an output from $\pi_1$. Then to determine whether a query $q_i$ is included in the uniform sample, lines 10-11 perform one more linear pass over $Q$ wherein a committed sample bit is set to $1$ if it's a member of group $0$ and $\pi_0(i) < \nu$, or if it's a member of group $1$ and $\pi_1(i) < \nu$ (we define $(\bot < k) = 0$ for any $k$).

The construction thus guarantees claims (1) and (2) since group $0$ queries with $S[i] = 1$ are exactly those which are mapped to $[0,\nu)$ by the random permutation $\pi_0$, and group $1$ queries with $S[i] = 1$ are exactly those which are mapped to $[0, \nu)$ by $\pi_1$, which is equivalent to taking a $\nu$-sized uniform sample from each group. Claim (3) is guaranteed by the underlying ZKP protocols used to perform the authenticated operations and ZKRAM access~\cite{weng2020wolverine, Franzese2021zkram}. 
\end{proofsketch}

\begin{corollary}
    (Informal) Algorithm~\ref{alg:audit} reveals no information to $\verifier$ about the client data and model parameters other than what can be inferred from $n_0$ and $n_1$, and the sample array $S$.
\end{corollary}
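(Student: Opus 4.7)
The plan is to reduce the leakage of Algorithm~\ref{alg:audit} to the leakage of its constituent building blocks via a standard simulator-based argument: I would construct a simulator that, given only $N_a$, $N_b$, the sample array $S$, and the public accept/reject output of the audit, produces a view of $\verifier$ that is computationally indistinguishable from a real execution. The building blocks to invoke are (i) the interactive boolean-circuit ZKP of~\cite{weng2020wolverine}, which provides zero-knowledge for all arithmetic and logical operations on IT-MAC-authenticated values, (ii) the ZKRAM primitive of~\cite{Franzese2021zkram}, (iii) the hiding property of the hash-based commitment scheme in the random oracle model, and (iv) Theorem~\ref{thm:balanced-sample}, which already characterizes the leakage of the Algorithm~\ref{alg:balanced-sample} subroutine as exactly $N_a$ and $N_b$ (plus the output commitments to $S$).

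I would then walk through Algorithm~\ref{alg:audit} line by line, identifying the source of each piece of $\verifier$'s view. Steps 1--2 produce IT-MAC commitments to $(q_i, r_i, o_i)$, which are hiding and reveal nothing. Steps 3--11 perform only authenticated arithmetic and comparisons on committed values, so by the zero-knowledge property of~\cite{weng2020wolverine} the transcript can be simulated from the public final bit $\theta \geq |c_a/N_a - c_b/N_b|$; importantly, the intermediate counts $c_a, c_b, N_a, N_b$ are never opened, so only the accept/reject decision escapes. Step 12 invokes Algorithm~\ref{alg:balanced-sample}, and its leakage is already bounded by Theorem~\ref{thm:balanced-sample}. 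The loop in steps 13--16 then opens the indicator bits $\texttt{Open}(\llbracket S[j] \rrbracket)$, which is precisely the revelation of the sample array $S$ named in the corollary, and performs two further ZKPs: step 14 invokes Algorithm~\ref{alg:pp-inf}, whose zero-knowledge property reduces to~\cite{weng2020wolverine}, and step 15 proves a hash equality using a ZKP over the authenticated preimage, again zero-knowledge by construction.

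The main obstacle is compositional: I must argue that \emph{combining} the revealed sample $S$ with the accept bit from step 11 (which is computed over all of $Q$, not just the sampled portion) does not leak more than $N_a, N_b, S$ permit. The key observation is that $c_a, c_b, N_a, N_b$ are only used inside an authenticated comparison whose sole public output is one bit, and the $2\nu$ ZKPs of correctness and consistency executed in steps 13--16 are standard zero-knowledge subprotocols that reveal nothing about the underlying $q_j, r_j, o_j$ or model parameters beyond the positions $j$ (which are part of $S$). A subtle point is that a malicious $\verifier$ supplies the permutations $\pi_a, \pi_b$ in Algorithm~\ref{alg:balanced-sample}; this does not affect leakage to $\verifier$ since those values are already in $\verifier$'s view, and the simulator can reuse them.

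I would conclude by exhibiting the simulator explicitly: it samples dummy IT-MAC commitments for steps 1--2, invokes the simulator of~\cite{weng2020wolverine} to produce the transcript of steps 3--11 consistent with the public accept bit, invokes the Algorithm~\ref{alg:balanced-sample} simulator guaranteed by Theorem~\ref{thm:balanced-sample} using inputs $N_a, N_b$, opens the indicator bits according to $S$, and finally invokes the ZKP simulators for steps 14--15 on the $2\nu$ sampled indices. Indistinguishability from the real view follows by a hybrid argument replacing each real subprotocol transcript with its simulated counterpart, concluding the corollary.
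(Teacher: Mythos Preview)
Your proposal is correct and follows essentially the same approach as the paper's proof sketch: invoke Theorem~\ref{thm:balanced-sample} to bound the leakage of the sampling subroutine as $N_a,N_b$, appeal to the zero-knowledge guarantees of the underlying ZKP protocols~\cite{weng2020wolverine,Franzese2021zkram} and the hash-commitment scheme for all remaining steps, and observe that the opening of the indicator bits in step~13 is exactly the revelation of $S$. Your version is considerably more fleshed out than the paper's three-sentence sketch---you give an explicit simulator and a hybrid argument, and you flag the compositional subtlety of combining the step-11 accept bit with the revealed $S$---but the logical skeleton is the same.
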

\begin{proofsketch}
    By Theorem~\ref{thm:balanced-sample}, taking the group-balanced sample with Algorithm~\ref{alg:balanced-sample} reveals $n_0, n_1$ to $\verifier$, but no other information. By the security of the underlying ZKP protocol~\cite{weng2020wolverine} and hash-based commitment scheme~\cite{weng2021mystique}, all lines besides line 14 leak no additional information to $\verifier$.
    
    In line 14, after obtaining the sample array $S$, $\prover$ opens each of the $2\nu$ bits of $S$ such that $\llbracket S[i] \rrbracket == 1$ which effectively reveals $S$ since all other entries are guaranteed to be $0$.
\end{proofsketch}

Revealing $S$ to $\verifier$ gives some knowledge of which of their commitments correspond to clients included in the sample. By itself, this is an innocuous piece of information -- it simply gives $\verifier$ knowledge that a small subset of query commitments correspond to users with $\frac{1}{2}$ probability of being from group $0$ or group $1$, rather than $\frac{n_0}{n}$ and $\frac{n_1}{n}$ as is implied by knowledge of $n_0$, $n_1$, and $n$. If we consider stronger adversarial capacities of $\verifier$, e.g. that they somehow have outside knowledge about $2\nu - 1$ of the queries in the sample, more precise information can be inferred. In this example, $\verifier$ can infer the demographic attribute of the single unknown query since they know which group has $\nu-1$ sampled queries and which group has $\nu$. 

While it is important to be mindful of this possible inference, in the proposed setting of a regulatory body auditing the fairness of a machine learning model we anticipate a threat model much closer to the former case than the latter. Further, by integrating a read/write ZKRAM~\cite{Franzese2021zkram} we can remove this leakage in exchange for a minor to moderate additional computational cost (depending on the size of the query input). This is demonstrated in Algorithm~\ref{alg:audit-noleak}.

\section{Datasets}
\label{app:dataset}
Table~\ref{tab:dataset} describes five standard fairness benchmarking datasets used in this paper: COMPAS~\cite{angwin2016machine}, Communities and Crime~\cite{communities_and_crime}, Default Credit~\cite{Default_credit}, Adult Income~\cite{Adult_income}, and German Credit~\cite{german}.

\begin{table}[h]
\centering
    \scriptsize
    \caption{Summary of datasets.}
    \begin{tabular}{l|l|l|l|l|l}
    \Xhline{3\arrayrulewidth} 
    Dataset & License & \#Samples & \#Attr. & Demographic Var. & Task  \\
    \hline
      COMPAS  & DbCL 1.0   & 6,151 & 8 & Race  & Recidivism\\
      Crime & CC BY 4.0    & 1,993 & 22 & Race & Crime rate \\
      Credit & CC BY 4.0  & 30,000 & 23 & Age & Card Payment\\
      Adult & CC BY 4.0   & 45,222  & 14 & Gender & Income\\
      German & CC BY 4.0 & 800 & 61 & Foreign Worker & Loan \\
      \Xhline{3\arrayrulewidth} 
    \end{tabular}

    \label{tab:dataset}
\end{table}

\section{Supplementary Results}
\label{app:supplementary-results}

\subsection{Details of Audit Phase Ablation Study}
\label{app:audit-phase-breakdown}
\myparagraph{Runtime Profiling for Total Query Volume}. \name computes fairness on all user queries, and uniformly samples $\nu$ queries from each subpopulation for correctness and consistency checks. Both sampling runtime (Algorithm~\ref{alg:balanced-sample}) and fairness computation runtime (Algorithm~\ref{alg:audit} lines 6-11) are independent of the number of sampled queries and the model type -- they depend only on the number of user queries $|Q|=n$. Therefore, we show their relationship to the number of queries in Figure~\ref{fig:audit-phase-ablation} (left column). In general, the runtime of both Group-Balanced Uniform Sampling and Fairness Computation increases linearly with the number of user queries. However, the slope of the increase varies across different parts. Given a specific number of user queries, the runtime of fairness computation is significantly lower than the runtime of group-balanced uniform sampling. This is because Algorithm~\ref{alg:balanced-sample} utilizes a ZKRAM primitive~\cite{Franzese2021zkram} in order to realize the group-balanced uniform sample securely, while the fairness computation in Algorithm~\ref{alg:audit} uses lighter-weight ZK arithmetic operations~\cite{weng2020wolverine}. 

\myparagraph{Runtime Profiling for Verified Queries}. \name verifies user queries through correctness (line 20 in Algorithm~\ref{alg:audit}) and consistency (line 19 in Algorithm~\ref{alg:audit}) checks. Figure~\ref{fig:audit-phase-ablation} (right column) shows the runtime of these operations. Both correctness and consistency checks have time complexity linear in the number of verified client queries. The runtime of the consistency check dominates the overall runtime because it relies on a scheme for converting hash-based commitments to IT-MAC-based~\cite{nielsen2012itmac,damgaard2012itmac} authenticated values, which is a relatively expensive operation~\cite{weng2021mystique}.

\subsection{Additional Details for Parameterizing the Number of Verified Queries}
\label{app:addtl-cheating}

\looseness=-1 Figure~\ref{fig:Cheating} (left column) shows the probability of catching cheating with respect to the number of verified user queries ranging from 1,000 to 10,000. The more queries are verified, the higher chance that cheating is caught. We consider various values for $\epsilon$ -- this is the change that $\prover$ seeks to induce between the fairness gap measured by the audit, and the true fairness gap during the Service Phase (Definition~\ref{def:fairness-deviation}). The more unfair $\prover$'s model is, the higher they need to set $\epsilon$ to pass the audit. Meanwhile, the probability of evading detection for cheating decreases exponentially as $\epsilon$ gets bigger (Figure~\ref{fig:Cheating} right column). 

\begin{figure}
    \centering
 \begin{tabular}{rr}
\begin{tikzpicture}
\begin{axis}[cycle list name=color list,
                  axis lines=left, 
                  width=4cm,
                  height=4cm,
                  ymin=0.00,
                  ymax=1.00,
                  ylabel={\scriptsize Evading probability},
                  xlabel={\scriptsize Verified Queries $2\nu$},
                  ytick={0.0,0.2,0.4,0.6,0.8,1.0},
                  yticklabels={0.0,0.2,0.4,0.6,0.8,1.0},
                  xtick={1000,5000,...,10000},
                  scaled ticks=false,
                  xticklabel style={/pgf/number format/fixed,font=\scriptsize},
                  yticklabel style={font=\scriptsize},
                  legend style={font=\scriptsize},
                  ]
\addplot+[magenta, line width=2pt]
              table[x=v,y=e005] {cheat.txt};              
\addplot+[orange, densely dashdotted, line width=2pt]
              table[x=v,y=e01] {cheat.txt};
\addplot+[yellow, dotted,line width=2pt]
              table[x=v,y=e02] {cheat.txt};              

\legend{$\epsilon=0.005$,$\epsilon=0.010$,$\epsilon=0.020$}
\end{axis}
\end{tikzpicture}&
\begin{tikzpicture}
\begin{axis}[cycle list name=color list,
                  axis lines=left, 
                  width=4cm,
                  height=4cm,
                  ymode=log,
                  ymin=0,
                  ymax=6.834047635509879e-06,
                  enlarge y limits=0.01,
                  enlarge x limits=0.04,
                  xlabel={\scriptsize Deviation amount $\epsilon$},
                  scaled ticks=false,
                  xtick={0,0.04,...,0.12},
                  xticklabel style={/pgf/number format/fixed,font=\scriptsize},
                  yticklabel style={font=\scriptsize},
                  ]
\addplot+[black,line width=2pt] coordinates {(0.00625,6.834047635509879e-06) (0.0125, 4.499231350885441e-11)   (0.025,1.7417921715500588e-21)  (0.05, 1.6502116715101567e-42) (0.1,2.2371757216712917e-85)};
\end{axis}
\end{tikzpicture}
\end{tabular}
\caption{The probability that a cheating $\prover$ evades detection when: (Left) deviating from the fairness measurement by three fixed $\epsilon$ values at varying numbers of verified queries; (Right) deviating with 7600 verified queries with varying $\epsilon$. As $\epsilon$ gets smaller, the probability of evasion gets higher but it becomes substantially less impactful on the audit.}
\label{fig:Cheating}
\end{figure}

\subsection{Additional Benchmarks}
\label{app:addtl-benchmarks}
Figure~\ref{fig:influence_sizewitness} shows the accumulative runtime of computing fairness and checking the correctness of the model using all five datasets.

\begin{figure*}

\begin{filecontents*}{adult.csv}
9.75459, 9.97205, 9.77047, 10.0484 
14.5561, 14.6725, 14.6432, 14.6945
43.2845, 42.9368, 41.0349, 41.0673
\end{filecontents*}

\begin{filecontents*}{compas.csv}
8.07307,8.10088,8.00963,8.02366
10.9633,10.9036,11.0317,10.9782
27.2199,25.6626,25.4994,26.6763
\end{filecontents*}

\begin{filecontents*}{crime.csv}
11.2438,11.2644,11.0102,11.1222
17.8556,18.1327,17.9441,17.8654
51.8215,51.8783,50.9208,53.063	
\end{filecontents*}

\begin{filecontents*}{default.csv}
11.4866, 11.3409, 11.5124, 11.2023
18.3368, 18.3058, 17.5369, 18.3612
53.3441, 55.202	,52.4155,56.689
\end{filecontents*}

\begin{filecontents*}{german.csv}
21.163,20.3717,21.4031,20.6098
36.9474,37.1634,36.233,37.8414
133.921,128.893,135.827,109.775
\end{filecontents*}

\pgfplotstableread[col sep=comma]{adult.csv}\adult
\pgfplotstableread[col sep=comma]{german.csv}\german
\pgfplotstableread[col sep=comma]{default.csv}\default
\pgfplotstableread[col sep=comma]{crime.csv}\crime
\pgfplotstableread[col sep=comma]{compas.csv}\compas

\centering
\begin{tikzpicture}
\pgfplotstabletranspose\datatransposed{\compas}
	\begin{axis}[
		boxplot/draw direction = y,
		x axis line style = {opacity=0},
		axis x line* = bottom,
		axis y line = left,
            title={\scriptsize COMPAS},
	    width=6cm,
            height=4cm,
		ymajorgrids,
		xtick = {1, 2, 3},
		xticklabel style = {align=center, font=\scriptsize},
            yticklabel style = {align=center, font=\scriptsize},
		xticklabels = {1000, 2000, 7600},
		xtick style = {draw=none}, ylabel = {\scriptsize Running time (s)},
xlabel={\scriptsize Number of verified witnesses}
	]
		\foreach \n in {1,...,3} {
			\addplot+[boxplot, fill, draw=black] table[y index=\n] {\datatransposed};
		}
	\end{axis}
\end{tikzpicture}
\begin{tikzpicture}
\pgfplotstabletranspose\datatransposed{\adult}
	\begin{axis}[
		boxplot/draw direction = y,
		x axis line style = {opacity=0},
		axis x line* = bottom,
		axis y line = left,
            title={\scriptsize Adult},
	    width=6cm,
            height=4cm,
		ymajorgrids,
		xtick = {1, 2, 3},
		xticklabel style = {align=center, font=\scriptsize},
            yticklabel style = {align=center, font=\scriptsize},
		xticklabels = {1000, 2000, 7600},
		xtick style = {draw=none}, ylabel = {\scriptsize Running time (s)},
xlabel={\scriptsize Number of verified witnesses}
	]
		\foreach \n in {1,...,3} {
			\addplot+[boxplot, fill, draw=black] table[y index=\n] {\datatransposed};
		}
	\end{axis}
\end{tikzpicture}
\begin{tikzpicture}
\pgfplotstabletranspose\datatransposed{\crime}
	\begin{axis}[
		boxplot/draw direction = y,
		x axis line style = {opacity=0},
axis y line = left,
            title={\scriptsize Crime},
	    width=6cm,
            height=4cm,
		ymajorgrids,
		xtick = {1, 2, 3},
		xticklabel style = {align=center, font=\scriptsize},
            yticklabel style = {align=center, font=\scriptsize},
		xticklabels = {1000, 2000, 7600},
		xtick style = {draw=none}, ylabel = {\scriptsize Running time (s)},
xlabel={\scriptsize Number of verified witnesses}
	]
		\foreach \n in {1,...,3} {
			\addplot+[boxplot, fill, draw=black] table[y index=\n] {\datatransposed};
		}
	\end{axis}
\end{tikzpicture}\\
\begin{tikzpicture}
\pgfplotstabletranspose\datatransposed{\default}
	\begin{axis}[
		boxplot/draw direction = y,
		x axis line style = {opacity=0},
		axis x line* = bottom,
		axis y line = left,
            title={\scriptsize Default Credit},
	    width=6cm,
            height=4cm,
		ymajorgrids,
		xtick = {1, 2, 3},
		xticklabel style = {align=center, font=\scriptsize},
            yticklabel style = {align=center, font=\scriptsize},
		xticklabels = {1000, 2000, 7600},
		xtick style = {draw=none}, ylabel = {\scriptsize Running time (s)},
xlabel={\scriptsize Number of verified witnesses}
	]
		\foreach \n in {1,...,3} {
			\addplot+[boxplot, fill, draw=black] table[y index=\n] {\datatransposed};
		}
	\end{axis}
\end{tikzpicture}
\begin{tikzpicture}
\pgfplotstabletranspose\datatransposed{\german}
	\begin{axis}[
		boxplot/draw direction = y,
		x axis line style = {opacity=0},
		axis x line* = bottom,
		axis y line = left,
            title={\scriptsize German},
	    width=6cm,
            height=4cm,
		ymajorgrids,
		xtick = {1, 2, 3},
		xticklabel style = {align=center, font=\scriptsize},
            yticklabel style = {align=center, font=\scriptsize},
		xticklabels = {1000, 2000, 7600},
		xtick style = {draw=none}, ylabel = {\scriptsize Running time (s)},
xlabel={\scriptsize Number of verified witnesses}
	]
		\foreach \n in {1,...,3} {
			\addplot+[boxplot, fill, draw=black] table[y index=\n] {\datatransposed};
		}
	\end{axis}
\end{tikzpicture}

    \caption{Effect of the number of verified witnesses on the runtime of computing fairness and verifying the inference using the correct model.}
    \label{fig:influence_sizewitness}
\end{figure*}

\section{Limitations}
\label{app:limitations}

In this study we present \name, the first method for privacy-preserving certificates of online group fairness. In particular, we use zero-knowledge proofs which make it possible to measure fairness directly from evaluation data, thus removing onerous resource requirements for third-party auditors and the generalization error incurred by using validation sets. Further, our method makes dramatic improvements to scalability in comparison to baselines.

While \name improves the scalability of ZKP fairness auditing by orders of magnitude, the computational requirements could still be an obstacle to adoption in applications that use very large models. Improving the consistency check which comprises our runtime bottleneck is thus an interesting future direction. Of note, our work uses ZKPs of inference (\eg \cite{weng2021mystique,chen2024zkml,sun2024zkllm}), and thus will automatically benefit from ongoing advancements in this area. 

We also note that while verification of group fairness is important, the metric itself has limitations. Machine learning models can cause inequitable treatment of minority populations even while satisfying group fairness, and thus it is important to not consider cryptographic verification of group fairness as a satisfactory end goal in isolation. Rather, we intend that \name is used as part of an ensemble of auditing approaches in order to achieve more equitable outcomes from ML tools. To realize this, it may be of interest to expand our cut and choose methodology as a more general framework for efficient ZKP-based ML auditing. Extending this design pattern to wider classes of ML models, fairness metrics, and other trustworthy properties (such as reliability, interpretability, and privacy) is a promising avenue for future work.

Lastly, we comment that while \name provides security against malicious service providers, auditors, and clients individually, it does not account for collusion between parties. `Sybil' attacks, wherein a malicious service provider produces fake clients, could be used in a real world attack to make a model appear fair on evaluation data. This could be handled by authentication of clients by the auditor, but it is worth noting as a potential real world attack.

\section{Impact Statement}
\label{app:impact}
The present study proposes a method for auditing ML systems for fairness. It is the hope of the authors that \name and other ZKP-based auditing techniques can be used in service of making deployed ML systems function more ethically by providing accountability for equitable treatment of disadvantaged populations. We stress that \name and other ZKP-based methods should not be seen as a \emph{replacement} for other forms of auditing. Mathematical fairness criteria alone are not sufficient to determine whether an ML system is functioning ethically - a human needs to be in the loop somewhere. Rather we see these methods as a pathway for providing accountability and transparency in ML systems, to be used as a complement with other methods, so that AI can be regulated more effectively and with lesser expenditure of resources.

\section*{NeurIPS Paper Checklist}

\begin{enumerate}

\item {\bf Claims}
    \item[] Question: Do the main claims made in the abstract and introduction accurately reflect the paper's contributions and scope?
    \item[] Answer: \answerYes{} \item[] Justification: We claim that existing methods for confidential certification of fairness either lack reliability, due to the problems with offline fairness outlined in our introduction, and/or scalability, due to the high computational overhead of cryptographic methods for verifying ML operations. Our method remediates these problems by proving group fairness \emph{online}, as detailed in Section~\ref{sec:query-auth} and~\ref{sec:audit}, and doing so efficiently as detailed in Section~\ref{sec:evaluation}.  
    \item[] Guidelines:
    \begin{itemize}
        \item The answer NA means that the abstract and introduction do not include the claims made in the paper.
        \item The abstract and/or introduction should clearly state the claims made, including the contributions made in the paper and important assumptions and limitations. A No or NA answer to this question will not be perceived well by the reviewers. 
        \item The claims made should match theoretical and experimental results, and reflect how much the results can be expected to generalize to other settings. 
        \item It is fine to include aspirational goals as motivation as long as it is clear that these goals are not attained by the paper. 
    \end{itemize}

\item {\bf Limitations}
    \item[] Question: Does the paper discuss the limitations of the work performed by the authors?
    \item[] Answer: \answerYes{} \item[] Justification: Limitations are discussed in Appendix~\ref{app:limitations}.
    \item[] Guidelines:
    \begin{itemize}
        \item The answer NA means that the paper has no limitation while the answer No means that the paper has limitations, but those are not discussed in the paper. 
        \item The authors are encouraged to create a separate "Limitations" section in their paper.
        \item The paper should point out any strong assumptions and how robust the results are to violations of these assumptions (e.g., independence assumptions, noiseless settings, model well-specification, asymptotic approximations only holding locally). The authors should reflect on how these assumptions might be violated in practice and what the implications would be.
        \item The authors should reflect on the scope of the claims made, e.g., if the approach was only tested on a few datasets or with a few runs. In general, empirical results often depend on implicit assumptions, which should be articulated.
        \item The authors should reflect on the factors that influence the performance of the approach. For example, a facial recognition algorithm may perform poorly when image resolution is low or images are taken in low lighting. Or a speech-to-text system might not be used reliably to provide closed captions for online lectures because it fails to handle technical jargon.
        \item The authors should discuss the computational efficiency of the proposed algorithms and how they scale with dataset size.
        \item If applicable, the authors should discuss possible limitations of their approach to address problems of privacy and fairness.
        \item While the authors might fear that complete honesty about limitations might be used by reviewers as grounds for rejection, a worse outcome might be that reviewers discover limitations that aren't acknowledged in the paper. The authors should use their best judgment and recognize that individual actions in favor of transparency play an important role in developing norms that preserve the integrity of the community. Reviewers will be specifically instructed to not penalize honesty concerning limitations.
    \end{itemize}

\item {\bf Theory assumptions and proofs}
    \item[] Question: For each theoretical result, does the paper provide the full set of assumptions and a complete (and correct) proof?
    \item[] Answer: \answerYes{} \item[] Justification: Theorem~\ref{thm:sound-fairness} is proven in Appendix~\ref{app:proof-sound-fairness}.
    \item[] Guidelines:
    \begin{itemize}
        \item The answer NA means that the paper does not include theoretical results. 
        \item All the theorems, formulas, and proofs in the paper should be numbered and cross-referenced.
        \item All assumptions should be clearly stated or referenced in the statement of any theorems.
        \item The proofs can either appear in the main paper or the supplemental material, but if they appear in the supplemental material, the authors are encouraged to provide a short proof sketch to provide intuition. 
        \item Inversely, any informal proof provided in the core of the paper should be complemented by formal proofs provided in appendix or supplemental material.
        \item Theorems and Lemmas that the proof relies upon should be properly referenced. 
    \end{itemize}

    \item {\bf Experimental result reproducibility}
    \item[] Question: Does the paper fully disclose all the information needed to reproduce the main experimental results of the paper to the extent that it affects the main claims and/or conclusions of the paper (regardless of whether the code and data are provided or not)?
    \item[] Answer: \answerYes{} \item[] Justification: We fully disclose our algorithms, datasets, and experimental setup.
    \item[] Guidelines:
    \begin{itemize}
        \item The answer NA means that the paper does not include experiments.
        \item If the paper includes experiments, a No answer to this question will not be perceived well by the reviewers: Making the paper reproducible is important, regardless of whether the code and data are provided or not.
        \item If the contribution is a dataset and/or model, the authors should describe the steps taken to make their results reproducible or verifiable. 
        \item Depending on the contribution, reproducibility can be accomplished in various ways. For example, if the contribution is a novel architecture, describing the architecture fully might suffice, or if the contribution is a specific model and empirical evaluation, it may be necessary to either make it possible for others to replicate the model with the same dataset, or provide access to the model. In general. releasing code and data is often one good way to accomplish this, but reproducibility can also be provided via detailed instructions for how to replicate the results, access to a hosted model (e.g., in the case of a large language model), releasing of a model checkpoint, or other means that are appropriate to the research performed.
        \item While NeurIPS does not require releasing code, the conference does require all submissions to provide some reasonable avenue for reproducibility, which may depend on the nature of the contribution. For example
        \begin{enumerate}
            \item If the contribution is primarily a new algorithm, the paper should make it clear how to reproduce that algorithm.
            \item If the contribution is primarily a new model architecture, the paper should describe the architecture clearly and fully.
            \item If the contribution is a new model (e.g., a large language model), then there should either be a way to access this model for reproducing the results or a way to reproduce the model (e.g., with an open-source dataset or instructions for how to construct the dataset).
            \item We recognize that reproducibility may be tricky in some cases, in which case authors are welcome to describe the particular way they provide for reproducibility. In the case of closed-source models, it may be that access to the model is limited in some way (e.g., to registered users), but it should be possible for other researchers to have some path to reproducing or verifying the results.
        \end{enumerate}
    \end{itemize}

\item {\bf Open access to data and code}
    \item[] Question: Does the paper provide open access to the data and code, with sufficient instructions to faithfully reproduce the main experimental results, as described in supplemental material?
    \item[] Answer: \answerNo{} \item[] Justification: We will release code upon acceptance of the paper.
    \item[] Guidelines:
    \begin{itemize}
        \item The answer NA means that paper does not include experiments requiring code.
        \item Please see the NeurIPS code and data submission guidelines (\url{https://nips.cc/public/guides/CodeSubmissionPolicy}) for more details.
        \item While we encourage the release of code and data, we understand that this might not be possible, so “No” is an acceptable answer. Papers cannot be rejected simply for not including code, unless this is central to the contribution (e.g., for a new open-source benchmark).
        \item The instructions should contain the exact command and environment needed to run to reproduce the results. See the NeurIPS code and data submission guidelines (\url{https://nips.cc/public/guides/CodeSubmissionPolicy}) for more details.
        \item The authors should provide instructions on data access and preparation, including how to access the raw data, preprocessed data, intermediate data, and generated data, etc.
        \item The authors should provide scripts to reproduce all experimental results for the new proposed method and baselines. If only a subset of experiments are reproducible, they should state which ones are omitted from the script and why.
        \item At submission time, to preserve anonymity, the authors should release anonymized versions (if applicable).
        \item Providing as much information as possible in supplemental material (appended to the paper) is recommended, but including URLs to data and code is permitted.
    \end{itemize}

\item {\bf Experimental setting/details}
    \item[] Question: Does the paper specify all the training and test details (e.g., data splits, hyperparameters, how they were chosen, type of optimizer, etc.) necessary to understand the results?
    \item[] Answer: \answerYes{} \item[] Justification: Our experiments are runtime benchmarks, and they run identically regardless of training and test details. 
    \item[] Guidelines:
    \begin{itemize}
        \item The answer NA means that the paper does not include experiments.
        \item The experimental setting should be presented in the core of the paper to a level of detail that is necessary to appreciate the results and make sense of them.
        \item The full details can be provided either with the code, in appendix, or as supplemental material.
    \end{itemize}

\item {\bf Experiment statistical significance}
    \item[] Question: Does the paper report error bars suitably and correctly defined or other appropriate information about the statistical significance of the experiments?
    \item[] Answer: \answerYes{} \item[] Justification: We report standard deviations for our main benchmarks in Table~\ref{tab:dataset-runtime}.
    \item[] Guidelines:
    \begin{itemize}
        \item The answer NA means that the paper does not include experiments.
        \item The authors should answer "Yes" if the results are accompanied by error bars, confidence intervals, or statistical significance tests, at least for the experiments that support the main claims of the paper.
        \item The factors of variability that the error bars are capturing should be clearly stated (for example, train/test split, initialization, random drawing of some parameter, or overall run with given experimental conditions).
        \item The method for calculating the error bars should be explained (closed form formula, call to a library function, bootstrap, etc.)
        \item The assumptions made should be given (e.g., Normally distributed errors).
        \item It should be clear whether the error bar is the standard deviation or the standard error of the mean.
        \item It is OK to report 1-sigma error bars, but one should state it. The authors should preferably report a 2-sigma error bar than state that they have a 96\% CI, if the hypothesis of Normality of errors is not verified.
        \item For asymmetric distributions, the authors should be careful not to show in tables or figures symmetric error bars that would yield results that are out of range (e.g. negative error rates).
        \item If error bars are reported in tables or plots, The authors should explain in the text how they were calculated and reference the corresponding figures or tables in the text.
    \end{itemize}

\item {\bf Experiments compute resources}
    \item[] Question: For each experiment, does the paper provide sufficient information on the computer resources (type of compute workers, memory, time of execution) needed to reproduce the experiments?
    \item[] Answer: \answerYes{} \item[] Justification: As mentioned in the experimental evaluation, all of our experiments are run on a MacBook Pro laptop CPU.
    \item[] Guidelines:
    \begin{itemize}
        \item The answer NA means that the paper does not include experiments.
        \item The paper should indicate the type of compute workers CPU or GPU, internal cluster, or cloud provider, including relevant memory and storage.
        \item The paper should provide the amount of compute required for each of the individual experimental runs as well as estimate the total compute. 
        \item The paper should disclose whether the full research project required more compute than the experiments reported in the paper (e.g., preliminary or failed experiments that didn't make it into the paper). 
    \end{itemize}
    
\item {\bf Code of ethics}
    \item[] Question: Does the research conducted in the paper conform, in every respect, with the NeurIPS Code of Ethics \url{https://neurips.cc/public/EthicsGuidelines}?
    \item[] Answer: \answerYes{} \item[] Justification: This research complies with the code of ethics.
    \item[] Guidelines:
    \begin{itemize}
        \item The answer NA means that the authors have not reviewed the NeurIPS Code of Ethics.
        \item If the authors answer No, they should explain the special circumstances that require a deviation from the Code of Ethics.
        \item The authors should make sure to preserve anonymity (e.g., if there is a special consideration due to laws or regulations in their jurisdiction).
    \end{itemize}

\item {\bf Broader impacts}
    \item[] Question: Does the paper discuss both potential positive societal impacts and negative societal impacts of the work performed?
    \item[] Answer: \answerYes{} \item[] Justification: We provide an Impact Statement in Appendix~\ref{app:impact}.
    \item[] Guidelines:
    \begin{itemize}
        \item The answer NA means that there is no societal impact of the work performed.
        \item If the authors answer NA or No, they should explain why their work has no societal impact or why the paper does not address societal impact.
        \item Examples of negative societal impacts include potential malicious or unintended uses (e.g., disinformation, generating fake profiles, surveillance), fairness considerations (e.g., deployment of technologies that could make decisions that unfairly impact specific groups), privacy considerations, and security considerations.
        \item The conference expects that many papers will be foundational research and not tied to particular applications, let alone deployments. However, if there is a direct path to any negative applications, the authors should point it out. For example, it is legitimate to point out that an improvement in the quality of generative models could be used to generate deepfakes for disinformation. On the other hand, it is not needed to point out that a generic algorithm for optimizing neural networks could enable people to train models that generate Deepfakes faster.
        \item The authors should consider possible harms that could arise when the technology is being used as intended and functioning correctly, harms that could arise when the technology is being used as intended but gives incorrect results, and harms following from (intentional or unintentional) misuse of the technology.
        \item If there are negative societal impacts, the authors could also discuss possible mitigation strategies (e.g., gated release of models, providing defenses in addition to attacks, mechanisms for monitoring misuse, mechanisms to monitor how a system learns from feedback over time, improving the efficiency and accessibility of ML).
    \end{itemize}
    
\item {\bf Safeguards}
    \item[] Question: Does the paper describe safeguards that have been put in place for responsible release of data or models that have a high risk for misuse (e.g., pretrained language models, image generators, or scraped datasets)?
    \item[] Answer: \answerNA{}{} \item[] Justification: Methods in our paper do not have high risk of misuse.
    \item[] Guidelines:
    \begin{itemize}
        \item The answer NA means that the paper poses no such risks.
        \item Released models that have a high risk for misuse or dual-use should be released with necessary safeguards to allow for controlled use of the model, for example by requiring that users adhere to usage guidelines or restrictions to access the model or implementing safety filters. 
        \item Datasets that have been scraped from the Internet could pose safety risks. The authors should describe how they avoided releasing unsafe images.
        \item We recognize that providing effective safeguards is challenging, and many papers do not require this, but we encourage authors to take this into account and make a best faith effort.
    \end{itemize}

\item {\bf Licenses for existing assets}
    \item[] Question: Are the creators or original owners of assets (e.g., code, data, models), used in the paper, properly credited and are the license and terms of use explicitly mentioned and properly respected?
    \item[] Answer: \answerYes{} \item[] Justification: We provide citations and licenses for all datasets (Appendix~\ref{app:dataset}) and EMP-toolkit in the paper.
    \item[] Guidelines:
    \begin{itemize}
        \item The answer NA means that the paper does not use existing assets.
        \item The authors should cite the original paper that produced the code package or dataset.
        \item The authors should state which version of the asset is used and, if possible, include a URL.
        \item The name of the license (e.g., CC-BY 4.0) should be included for each asset.
        \item For scraped data from a particular source (e.g., website), the copyright and terms of service of that source should be provided.
        \item If assets are released, the license, copyright information, and terms of use in the package should be provided. For popular datasets, \url{paperswithcode.com/datasets} has curated licenses for some datasets. Their licensing guide can help determine the license of a dataset.
        \item For existing datasets that are re-packaged, both the original license and the license of the derived asset (if it has changed) should be provided.
        \item If this information is not available online, the authors are encouraged to reach out to the asset's creators.
    \end{itemize}

\item {\bf New assets}
    \item[] Question: Are new assets introduced in the paper well documented and is the documentation provided alongside the assets?
    \item[] Answer: \answerNA{} \item[] Justification: We do not release new assets.
    \item[] Guidelines:
    \begin{itemize}
        \item The answer NA means that the paper does not release new assets.
        \item Researchers should communicate the details of the dataset/code/model as part of their submissions via structured templates. This includes details about training, license, limitations, etc. 
        \item The paper should discuss whether and how consent was obtained from people whose asset is used.
        \item At submission time, remember to anonymize your assets (if applicable). You can either create an anonymized URL or include an anonymized zip file.
    \end{itemize}

\item {\bf Crowdsourcing and research with human subjects}
    \item[] Question: For crowdsourcing experiments and research with human subjects, does the paper include the full text of instructions given to participants and screenshots, if applicable, as well as details about compensation (if any)? 
    \item[] Answer: \answerNA{} \item[] Justification: This paper includes no crowdsourcing or research with human subjects.
    \item[] Guidelines:
    \begin{itemize}
        \item The answer NA means that the paper does not involve crowdsourcing nor research with human subjects.
        \item Including this information in the supplemental material is fine, but if the main contribution of the paper involves human subjects, then as much detail as possible should be included in the main paper. 
        \item According to the NeurIPS Code of Ethics, workers involved in data collection, curation, or other labor should be paid at least the minimum wage in the country of the data collector. 
    \end{itemize}

\item {\bf Institutional review board (IRB) approvals or equivalent for research with human subjects}
    \item[] Question: Does the paper describe potential risks incurred by study participants, whether such risks were disclosed to the subjects, and whether Institutional Review Board (IRB) approvals (or an equivalent approval/review based on the requirements of your country or institution) were obtained?
    \item[] Answer: \answerNA{} \item[] Justification: This paper includes no crowdsourcing or research with human subjects.
    \item[] Guidelines:
    \begin{itemize}
        \item The answer NA means that the paper does not involve crowdsourcing nor research with human subjects.
        \item Depending on the country in which research is conducted, IRB approval (or equivalent) may be required for any human subjects research. If you obtained IRB approval, you should clearly state this in the paper. 
        \item We recognize that the procedures for this may vary significantly between institutions and locations, and we expect authors to adhere to the NeurIPS Code of Ethics and the guidelines for their institution. 
        \item For initial submissions, do not include any information that would break anonymity (if applicable), such as the institution conducting the review.
    \end{itemize}

\item {\bf Declaration of LLM usage}
    \item[] Question: Does the paper describe the usage of LLMs if it is an important, original, or non-standard component of the core methods in this research? Note that if the LLM is used only for writing, editing, or formatting purposes and does not impact the core methodology, scientific rigorousness, or originality of the research, declaration is not required.
\item[] Answer: \answerNA{} \item[] Justification: LLMs were not used in any part of this work.
    \item[] Guidelines:
    \begin{itemize}
        \item The answer NA means that the core method development in this research does not involve LLMs as any important, original, or non-standard components.
        \item Please refer to our LLM policy (\url{https://neurips.cc/Conferences/2025/LLM}) for what should or should not be described.
    \end{itemize}

\end{enumerate}

\end{document}